\documentclass[10pt,journal,twoside]{IEEEtran}
\usepackage{comment}
\usepackage{amsmath,amssymb,amsthm,mathtools}
\usepackage{graphicx,booktabs}
\usepackage[T1]{fontenc}
\usepackage[expansion=false,protrusion=false]{microtype}
\usepackage[dvipsnames]{xcolor}
\definecolor{linkc}{RGB}{20,60,120}
\usepackage[colorlinks=true,allcolors=linkc]{hyperref}
\usepackage[capitalise]{cleveref}

\usepackage[numbers,sort&compress]{natbib}

\newtheorem{theorem}{Theorem}
\newtheorem{proposition}[theorem]{Proposition}
\newtheorem{lemma}[theorem]{Lemma}
\newtheorem{corollary}[theorem]{Corollary}
\theoremstyle{definition}
\newtheorem{definition}[theorem]{Definition}
\newtheorem{remark}[theorem]{Remark}

\DeclareMathOperator{\rank}{rank}

\DeclareMathOperator{\intr}{int}

\newcommand{\R}{\mathbb{R}}
\newcommand{\Z}{\mathbb{Z}}

\newcommand{\Inc}{\mathsf{B}}
\newcommand{\Cyc}{\mathsf{C}}
\newcommand{\Vpot}{U}

\title{When Gigawatts of Computational Load Disappear: Cycle-Space Certificates for Grid Synchronization and Transient Stability}
\author{Michael (Misha) Chertkov
\thanks{M. Chertkov is with the Program in Applied Mathematics and the Department of Mathematics, University of Arizona, Tucson, AZ 85721 USA (e-mail: chertkov@arizona.edu).}}

\begin{document}
\maketitle

\begin{abstract}
Rapid growth of data centers and artificial-intelligence services is producing computational loads at scales once associated mainly with largest power plants. Recent grid events show that a routine transmission disturbance can cause several gigawatts of data-center demand to disconnect or transfer to backup nearly at once. This article revisits the classical synchronization and transient-stability theory needed to reason about such events. We organize four lines of work---graph-based synchronization conditions, winding-number descriptions of nonlinear power flow, separable convex network optimization, and direct energy methods---into a single cycle-space certificate framework for the lossless fixed-voltage model. The static layer gives an exact strict-cohesion test within a prescribed winding cell and reveals the widely used D\"orfler--Chertkov--Bullo test as a quadratic surrogate of the same convex problem. The dynamic layer converts the critical-energy calculation into a finite family of convex boundary problems. Standard MATPOWER benchmarks illustrate both what the stronger static test gains and where it gains nothing: the 118-bus case admits $16.2\%$ more loading than the sufficient screen, while the 39-bus case is bridge-limited and the thresholds coincide. A stylized $2.7$-GW 39-bus event further shows that transient margin can change by about a factor of two depending on where balancing power is supplied, even when every final balanced operating point remains statically feasible. The result is a tutorial synthesis and an extensible deterministic certificate for emerging gigawatt-scale computational-load contingencies.
\end{abstract}

\begin{IEEEkeywords}
Computational loads, convex optimization, cycle space, data centers, power-system synchronization, transient stability.
\end{IEEEkeywords}

\section{Why Computational Loads Reopen a Classical Synchronization Problem}
\label{sec:intro}

This is no longer a hypothetical class of contingency.  On July~22, 2026, a mechanical failure removed a 230-kV line from service in Northern Virginia.  The fault itself was correctly cleared, but data centers in the Dominion zone then disconnected unexpectedly and transferred to on-site backup generation.  PJM's August~11 review reports an initial load drop of $2970$~MW followed by a second $1099$~MW wave triggered by the resulting voltage disturbance; in total, approximately $3.8$~GW of data-center load disappeared from the grid without warning, the largest such event PJM had experienced \cite{pjm2026largeload}.  The earlier technical presentation reports an Area Control Error excursion to $+3928$~MW and frequency reaching $60.092$~Hz \cite{pjm2026dominion}.  The event is therefore not evidence that artificial-intelligence training \emph{initiated} the transmission fault.  It is evidence of a different and more directly relevant fact: computational facilities can form a geographically concentrated, multi-gigawatt load whose protection and ride-through behavior can transform an otherwise cleared grid fault into an abrupt bulk-power imbalance.

The reliability community is responding accordingly.  PJM is evaluating ride-through requirements for large computational loads, and in July~2026 the Federal Energy Regulatory Commission directed the North American Electric Reliability Corporation to develop enforceable reliability standards and registration criteria for computational-load integration \cite{ferc2026computational}.  Large intermittent computational loads were already recognized as a voltage-control challenge at the high-performance-computing scale a decade ago \cite{zhao2014intermittent}; what has changed is the scale, concentration, and speed with which the load can now move.

This concern is broader than a single PJM event.  NERC's Large Loads Action Plan identifies computational facilities---including data centers, cryptocurrency facilities and artificial-intelligence workloads---as a distinct reliability problem because of their rapid growth, clustering, uncertain models and disturbance behavior \cite{nerc2026largeloads}.  NERC's 2026 risk-mitigation guidance specifically discusses coordinated voltage and frequency ride-through, dynamic reactive support, and fast downward-responsive resources as possible responses to sudden loss of large load \cite{nerc2026risk}.  Those recommendations span several layers of physics.  The present article addresses only one of them: the network-angle geometry that determines whether active-power redistribution admits a cohesive synchronous state and whether an electromechanical trajectory can be certified to reach it.

This change of scale has motivated recent work on planning, scheduling and coordinating data-center integration with the host grid.  Examples include quantifying the grid-dependent value of shifting or curtailing data-center demand \cite{khanal2026shift} and runtime co-simulation of data-center/grid coordination using measured AI-service workloads and grid simulators \cite{chung2026openg2g}.  The question taken up here is complementary and deliberately narrower: {\bf if a large load changes abruptly and balancing has been provisioned, can one certify that the transmission-grid angle dynamics remain synchronized?}

Why does this become a synchronization problem rather than only a generation-balancing problem?  When a large load disappears, total generation initially exceeds total demand and system frequency rises.  But the imbalance is also spatial: injections change at particular buses, power reroutes through the transmission graph, and relative voltage angles must move to support the new line-flow pattern.  A system can therefore possess enough aggregate balancing capability and still encounter an angle-security limitation.  This paper isolates that electromechanical geometry.  It asks first {\bf whether the prescribed post-event injections admit a cohesive synchronous state} and then {\bf whether the swing trajectory from the pre-event operating point is guaranteed to reach that state without crossing the cohesive boundary}.

The engineering question therefore has two halves that are easy to conflate.  The first is \emph{static}: for a specified balanced injection vector, does a strictly cohesive synchronous operating point exist?  D\"orfler, Chertkov and Bullo \cite{dorfler2013synchronization} give a sharp and widely used sufficient synchronization condition.  Exact characterizations based on winding cells and algorithms guaranteed to recover all solutions are available in the later flow/elastic-network framework of Jafarpour, Huang, Smith and Bullo \cite{jafarpour2022torus}.  Sections~\ref{sec:dual} and~\ref{sec:improve} expose a particularly simple separable convex-cost realization for the sinusoidal lossless model and identify precisely where the D\"orfler--Chertkov--Bullo screen sits relative to that exact cell-wise test.

The second question is \emph{dynamic}: starting from the old angles and applying the new balanced injection vector instantaneously, does the swing trajectory stay inside the cohesive region and converge to the new equilibrium?  \Cref{sec:trans} gives an energy certificate, and \Cref{sec:num} compares it with finite-horizon simulations.  The two questions share the same convex potential on the cohesive region, but they should not be conflated: static feasibility can hold while the transient certificate fails.

The organization of the certificate can be summarized as a sequence of questions rather than as a sequence of algorithms:

\begin{figure*}[t]
\centering
\fbox{\begin{minipage}{0.94\textwidth}
\centering
\textbf{disturbance and prescribed balancing allocation}
$\Longrightarrow$
\textbf{post-event injections}
$\Longrightarrow$
\textbf{cycle-space equilibrium oracle}
$\Longrightarrow$
\textbf{cohesive equilibrium}
$\Longrightarrow$
\textbf{boundary energy}
$\Longrightarrow$
\textbf{deterministic transient certificate}
\end{minipage}
}
\caption{The static-to-dynamic certificate pipeline used in this article.  The balancing allocation is an input to the calculation, not a controller synthesized by it.  Static feasibility and transient security are distinct layers: a post-event equilibrium can exist even when the energy certificate does not guarantee that the pre-event state reaches it without leaving the cohesive region.}
\label{fig:pipeline}
\end{figure*}

\paragraph{What is deliberately left outside the model.}
A sudden loss or transfer of a large load excites several coupled layers of grid physics; this paper isolates only the electromechanical angle layer.  Voltage and reactive-power excursions are not secondary in importance.  Indeed, the July~22 PJM event itself included a high-voltage excursion that triggered additional load transfer \cite{pjm2026dominion}.  Prior work on intermittent high-performance-computing demand showed that fast local reactive support, for example a D-STATCOM coordinated with slower switched capacitors, can substantially improve voltage regulation \cite{zhao2014intermittent}.  We therefore bracket voltage dynamics here not because they are automatically benign, nor because they are always confined to milliseconds, but because they require an AC/reactive-power model absent from the fixed-voltage reduction.  Their fastest electromagnetic and converter-control components occur on shorter time scales than the seconds-to-tens-of-seconds electromechanical window emphasized below, while slower voltage recovery can overlap that window.

We likewise do not model the \emph{secondary} consequences of the angle transient, where ``secondary'' refers to causal ordering rather than severity.  Transient line or transformer overloads can activate protection and initiate further outages.  Recent dynamic $N\!-\!1$ screening work, for example, studies overload risk on lines and transformers after common short-circuit faults over sub-second to tens-of-seconds horizons \cite{almada2026dynamic}.  Such overload/protection/cascade mechanisms are important extensions, but the present question stops one layer earlier: does the rotor-angle trajectory itself stay inside the cohesive region and converge?

\paragraph{Review lens and contribution.}
Korsak gave an explicit counterexample to uniqueness of stable load-flow solutions \cite{korsak1972}.  Tavora and Smith studied the flow map, singular surfaces and large-angle stability geometry \cite{tavora1972}.  Winding/loop-flow interpretations and single-loop multistability were developed in \cite{coletta2016topologically,delabays2016multistability,manik2017cycle}.  Most importantly for our claims, Jafarpour, Huang, Smith and Bullo introduced a winding partition of the $n$-torus, proved that each winding cell contains at most one solution under the relevant monotonicity assumptions, bounded the number of solutions, and gave an algorithm guaranteed to compute all of them \cite{jafarpour2022torus}.  Thus \emph{at-most-one solution per winding cell} and \emph{all-solutions enumeration} are prior art.

There is also relevant prior for the convex flow formulation.  Bent, Bienstock and Chertkov \cite{bent2013syncaware} formulated the voltage-uniform lossless synchronization constraint as a separable convex optimization in line-flow variables, with objective obtained by integrating $\arcsin$ and with infeasibility detected when the optimizer reaches the capacity boundary.  This is naturally a monotropic/separable convex-cost network-flow problem in the sense of Rockafellar \cite{rockafellar1984network}.

We build on these foundations and: (i) eliminate flow conservation explicitly with an integer cycle basis and write the resulting strictly convex objective and its winding tilt in cycle coordinates; (ii) show, in the zero-winding cell selected by continuation from the small-angle branch, that the D\"orfler--Chertkov--Bullo (DCB) \cite{dorfler2013synchronization} sufficient synchronization condition is the capacity-box test evaluated at the minimizer of the \emph{quadratic Taylor surrogate} of the same separable convex cost; (iii) identify the bridge case as one in which cycle corrections are impossible; and (iv) use primal convexity on the closed cohesive polytope to decompose the direct-method critical energy into at most $2L$ convex face problems, one for each signed energized-edge constraint.  The numerical sections test these statements on stylized MATPOWER reductions.

For a review-oriented reader, the central point is not that any one of these ingredients is individually new.  It is that they become substantially more useful when placed in one chain of reasoning.  The cycle formulation shows what the classical sufficient screen neglects, the bridge corollary predicts when the stronger nonlinear calculation cannot help, and the same cohesive geometry turns the direct transient-energy barrier into a finite family of convex problems.

\paragraph{Results at a glance.}
The quantitative distinctions are visible on standard benchmark networks.  On MATPOWER case118, the D\"orfler--Chertkov--Bullo sufficient test reaches a loading multiplier $4.363697$, whereas the exact strict-cohesion oracle in the zero-winding cell reaches $5.068736$; a warm-started locally stable continuation persists to approximately $5.268367$.  Thus the convex cohesion oracle permits $16.16\%$ more loading than the quadratic sufficient screen and closes $77.93\%$ of the interval from that screen to the numerically continued branch limit.  On case39 the sufficient and exact cohesive thresholds coincide because the limiting branch is a bridge, on which cycle-space redistribution cannot change the flow.  In the transient calculation, convexity reduces the direct-method critical energy to at most $2L$ signed branch-face problems, with $L=|E|$ the number of energized branches.  The case39 experiments then show a separate effect: static feasibility may remain unchanged while the certified and simulated transient margins vary strongly with the spatial allocation of balancing power.

\section{A Tutorial Map of the Classical Tools}
\label{sec:tutorial}

The synchronization literature contains several mathematically related objects that answer different engineering questions.  For the present application, four are especially useful: a Laplacian-based sufficient synchronization screen, winding numbers that classify nonlinear equilibria on meshed graphs, a separable convex flow formulation of the cohesive power-flow equations, and direct energy methods for transient stability.  The purpose of this section is to place them on one map before using them technically.

\subsection{Three notions that should not be conflated}

\emph{Equilibrium existence} asks whether the post-event injections can be supported by any phase configuration satisfying the nonlinear active-power equations.  \emph{Local stability} asks whether a particular equilibrium is stable to sufficiently small perturbations; for the lossless swing model this is governed by the weighted Laplacian obtained from the cosine of the equilibrium angle differences.  \emph{Transient security} is stronger: starting from the actual pre-event state, does the finite disturbance trajectory remain in a prescribed safe region and converge to the desired equilibrium?  A system may pass the first two questions and fail the third.  Conversely, a conservative transient certificate may fail even when simulation indicates that the trajectory remains stable, because the certificate deliberately proves invariance of a restricted region rather than locating the full nonlinear basin of attraction.

The cohesive region $|\theta_i-\theta_j|<\pi/2$ is particularly useful because it is simultaneously an engineering angle-security surrogate and a mathematical convexity domain.  Inside it, every line contributes a positive stiffness $K_{ij}\cos(\theta_i-\theta_j)$, the swing potential is convex modulo a uniform phase, and the inverse line-flow map uses the principal branch of $\arcsin$.  These three facts are the common thread connecting the static and dynamic constructions below.

\subsection{How the four viewpoints developed}

The first viewpoint is the Laplacian or ``DC-like'' one.  Near a small-angle operating point, the sine coupling is linear and the active-power equations reduce to a weighted graph Laplacian.  This approximation is attractive because one pseudoinverse solve produces all edge angle differences.  D\"orfler, Chertkov and Bullo showed that the same Laplacian object can be used in a nonlinear synchronization theorem: if the edgewise differences of $L_K^\dagger p$ stay below a sine-transformed threshold, then a unique cohesive stable solution exists \cite{dorfler2013synchronization}.  The result is much stronger than a heuristic DC power-flow check because the conclusion concerns the nonlinear sine-coupled system.  What remains hidden in the usual statement is that the tested flow itself is the minimizer of a quadratic network energy.

The second viewpoint is topological.  On a tree, line flows are fixed by conservation once the injections are given.  On a meshed network, however, conservative flows can differ by circulations around cycles.  Because phases are angular variables, principal phase differences around a cycle need not sum to zero as real numbers; they can sum to an integer multiple of $2\pi$.  This observation explains why the same injection vector may support several nonlinear equilibria.  Work on topological states, multistability and winding partitions made this structure explicit \cite{coletta2016topologically,delabays2016multistability,manik2017cycle,jafarpour2022torus}.  Jafarpour \emph{et al.} went further: under the relevant monotonicity conditions, each winding cell contains at most one solution and all solutions can be enumerated by searching the finite winding partition \cite{jafarpour2022torus}.  For the present article, winding numbers are not an alternative power-flow formalism; they are the labels that make the nonlinear cycle constraints unambiguous.

The third viewpoint is variational.  If a cohesive line flow $F_e$ is written as $K_e\sin\Delta_e$, then the principal angle difference is $\Delta_e=\arcsin(F_e/K_e)$.  Integrating this inverse relation produces a convex line cost.  Bent, Bienstock and Chertkov used exactly this structure to formulate lossless synchronization as a separable convex flow problem \cite{bent2013syncaware}.  In network-optimization language it is a monotropic problem \cite{rockafellar1984network}.  The cycle-space formulation used here does not replace that result; it eliminates the conservation constraints explicitly and reveals that winding cells correspond to linear tilts of one strictly convex objective.  This makes the connection to the Laplacian screen transparent: the DC flow is simply what results when the convex line cost is truncated after its quadratic term.

The fourth viewpoint is dynamical and predates much of the modern graph language.  Direct transient-stability methods construct a Lyapunov or energy function for the post-disturbance swing equations and compare the initial post-fault energy with an estimate of the energy required to reach the stability boundary \cite{athay1979practical,chiang1995direct}.  The classical challenge is not writing the energy function; it is locating the relevant boundary energy in a high-dimensional nonconvex landscape, often through a controlling unstable equilibrium point.  The cohesive restriction changes this geometry.  On the closed region where every energized-line difference lies between $-\pi/2$ and $\pi/2$, the same potential is convex.  The boundary is a finite union of signed edge faces, so its minimum energy can be obtained by convex minimization face by face.  This is the second place where the cohesive restriction buys tractability: first in equilibrium existence, then in transient certification.

Taken together, the four viewpoints form an approximation-and-escalation hierarchy rather than four competing theories.  The Laplacian screen is the cheapest and most conservative static layer; the full cycle-space convex program removes its quadratic-flow approximation inside a selected winding cell; the energy calculation adds trajectory information; and nonlinear time simulation remains the natural escalation when the certificate is inconclusive.  A failed sufficient test is therefore never interpreted as proof of instability.  This distinction is important operationally and will reappear in the benchmark discussion.

\subsection{Four complementary viewpoints}

\begin{table*}[t]
\centering
\caption{Four classical viewpoints used in the article.  The synthesis is useful because each viewpoint exposes a different part of the same synchronization problem.}
\label{tab:map}
\small
\begin{tabular}{p{0.17\textwidth}p{0.21\textwidth}p{0.235\textwidth}p{0.235\textwidth}}
\toprule
viewpoint & basic object & what it certifies or explains & principal limitation \\
\midrule
Laplacian synchronization screen \cite{dorfler2013synchronization}
& weighted graph Laplacian and its pseudoinverse
& a computationally cheap sufficient bound for cohesive synchronization
& evaluates the capacity test at the quadratic/DC surrogate flow rather than the full nonlinear cycle-space optimum \\
Winding-cell geometry \cite{coletta2016topologically,delabays2016multistability,manik2017cycle,jafarpour2022torus}
& integer phase advance around graph cycles
& classifies multiple nonlinear equilibria and gives at-most-one solution per winding cell under the relevant monotonicity assumptions
& all cells must still be considered if the goal is to enumerate every cohesive equilibrium \\
Separable convex network flow \cite{bent2013syncaware,rockafellar1984network}
& integrated $\arcsin$ line cost under flow conservation
& exact cohesive feasibility in a selected winding cell; cycle coordinates remove conservation constraints explicitly
& exact only for the lossless fixed-voltage cohesive model used to define the convex domain \\
Direct transient-energy method \cite{athay1979practical,chiang1995direct}
& kinetic plus post-event potential energy
& an invariant energy sublevel contained in the region of attraction
& conservative; classical implementations often require a difficult search for the controlling boundary object \\
\bottomrule
\end{tabular}
\end{table*}

The four complementary viewpoints are summarized in Table \ref{tab:map}.

\subsection{Why cycle space is the organizing coordinate}

Flow conservation and angle consistency are the two structural constraints of the lossless power-flow equations.  Flow conservation is linear: all feasible line-flow vectors form an affine space.  The nullspace of the incidence matrix is precisely the graph cycle space, so every conservative flow differs from a particular one only by circulations around cycles.  Angle consistency is nonlinear because line flows are mapped back to phase differences through $\arcsin$.  Writing the problem in cycle coordinates therefore separates the easy constraint from the nonlinear one: conservation is eliminated once and for all, while winding numbers appear as integer conditions on cycle sums of principal phase differences.

This same decomposition clarifies two limiting cases before any computation is performed.  On a tree there is no cycle space, so the DC/quadratic flow and the exact nonlinear cohesive flow coincide.  On a meshed network the nonlinear optimizer can redistribute stress through cycle flows, but it cannot alter the flow on a bridge because a bridge belongs to no cycle.  These observations become formal corollaries in Section~\ref{sec:improve} and explain the opposite behavior of the 39-bus and 118-bus benchmarks.

\section{Model and Cohesive Geometry}\label{sec:setting}

We now specialize the tutorial map to the standard lossless fixed-voltage network model.  The model suppresses reactive-power and voltage-magnitude dynamics so that active-power transfer is controlled only by phase differences.  It is not intended as a complete model of a modern data center or of a protection event; its value here is that the synchronization geometry is explicit enough to admit exact cell-wise statements and reproducible benchmark calculations.

Let $G=(V,E)$ be a connected graph, $|V|=n$, $|E|=L$. Fix an orientation and let
$\Inc\in\R^{n\times L}$ be the incidence matrix of the graph, $\Inc_{ie}=+1$ if
$i$ is the tail of $e$, $-1$ if the head. Line couplings $K_e>0$ collect into $K$,
injections and consumptions $p\in\R^n$ satisfy $\mathbf 1^\top p=0$, and
$L_K=\Inc\operatorname{diag}(K)\Inc^\top$ is the weighted Laplacian. A
\emph{synchronous} state solves
\begin{equation}\label{eq:sync}
  \Inc\big(K\circ\sin(\Inc^\top\theta)\big)=p .
\end{equation}
Write $\Delta=\Inc^\top\theta$, so $\Delta_e=\theta_i-\theta_j$ is the angle
difference \emph{across line} $e=(i,j)$ --- a relative quantity, never an absolute
bus angle --- and $F_e=K_e\sin\Delta_e$ is the active power flow on that line.
Call a state \emph{strictly cohesive} if $|\Delta_e|<\pi/2$ for every energized edge.  For the energy argument we also need the closed cohesive polytope
\begin{equation}\label{eq:omega}
  \Omega=\{\theta\in\R^n:\ |\Delta_e|\le\pi/2\ \ \forall e\in E\}.
\end{equation}
Thus a strictly cohesive state lies in $\intr\Omega$, whereas $\partial\Omega$ is the artificial barrier used by the direct-method certificate.  Throughout, $\intr S$ denotes the interior of a set $S$ and $\partial S$ its boundary.

Because the bus phases live on a torus, the $\Delta_e$ used to define a winding cell are principal edge differences, and their signed sum around a cycle can equal a nonzero multiple of $2\pi$.  Let $\Cyc\in\Z^{L\times c}$ be an oriented integer cycle-basis matrix whose columns represent a basis of directed graph cycles, so $\Inc\Cyc=0$ and $c=L-n+1$.  The winding coordinates of a cohesive torus state are then
$\varpi=\tfrac1{2\pi}\Cyc^\top\Delta\in\Z^c$.  Fixing any $F_p$ with $\Inc F_p=p$, every conservative flow is uniquely $F=F_p+\Cyc\psi$, $\psi\in\R^c$.
When we use the convex set $\Omega\subset\R^n$ below, we work in a fixed real lift of the zero-winding cell (and modulo the uniform phase); nonzero winding cells are represented instead by the linear tilt in the cycle-space formulation.

The swing dynamics, with inertia $M_i>0$ and damping $d_i>0$, are
\begin{equation}\label{eq:swing}
  M\dot\omega=p-d\circ\omega-\Inc\big(K\circ\sin(\Inc^\top\theta)\big),
  \qquad \dot\theta=\omega .
\end{equation}
Linearising \eqref{eq:swing} about a synchronous state gives the mass--spring
system with stiffness matrix $\Inc\operatorname{diag}(K\cos\Delta)\Inc^\top$, the
\emph{metagraph Laplacian}. Cohesion makes every metagraph weight
$K_e\cos\Delta_e$ nonnegative, so that matrix is positive semidefinite and the
state is stable: \emph{every cohesive synchronous state is a stable equilibrium of
\eqref{eq:swing}}.

The converse fails, and the failure is quantitative rather than incidental. Some
line angles may exceed $\pi/2$, making individual metagraph weights negative,
while the metagraph Laplacian retains a positive second eigenvalue and the state
remains stable. So $\Omega$ is a strict subset of the true stability region ---
Tavora's principal region \cite{tavora1972}, which $\Omega$ is inscribed in and
touches only at quadrature points. \Cref{sec:num} measures the resulting gap: on
IEEE 118 the last stable state has $\max_e|\Delta_e|=108.25^\circ$ with three of
$179$ lines past $90^\circ$, which is $3.7\%$ of loading beyond the last cohesive
state. Any criterion phrased as ``$|\Delta_e|\le\gamma<\pi/2$ for all $e$'' is
therefore sufficient and not necessary, and this paper is about making such a
criterion exact for what it claims, not about removing that restriction.

\section{Cycle-Space Convexity: Exact Cohesion Within a Winding Cell}\label{sec:dual}

The static problem becomes transparent after replacing bus phases by conservative line flows.  The integrated inverse-sine cost below is the convex conjugate of the cohesive swing potential.  In cycle coordinates its gradient is exactly the vector of cycle sums of principal angle differences, so stationarity enforces the required winding condition.

\begin{definition}\label{def:phi}
Let $g(s)=s\arcsin s+\sqrt{1-s^2}$ on $[-1,1]$ and
\begin{equation}\label{eq:phi}
\begin{aligned}
  \Phi(\psi)&=\sum_{e}K_e\,g(s_e),\\
  s_e&=\frac{(F_p+\Cyc\psi)_e}{K_e},\\
  \Omega_\psi&=\{\psi:|s_e|\le1\ \forall e\}.
\end{aligned}
\end{equation}
where $\Omega_\psi$ is a compact convex polytope, compact because $\Cyc$ has full
column rank.
\end{definition}

\begin{proposition}[Strict convexity]\label{prop:dual}
On $\intr\Omega_\psi$, with $\Delta_e=\arcsin s_e$,
\begin{equation}\label{eq:gradhess}
\begin{aligned}
  \nabla\Phi&=\Cyc^\top\arcsin(s),\\
  \nabla^2\Phi&=\Cyc^\top
  \operatorname{diag}(K_e\cos\Delta_e)^{-1}\Cyc\succ0 .
\end{aligned}
\end{equation}

\end{proposition}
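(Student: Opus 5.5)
The plan is a direct chain-rule computation followed by a rank argument.

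\emph{Scalar derivatives.} First I differentiate the scalar profile on $(-1,1)$:
\[
g'(s)=\arcsin s+\frac{s}{\sqrt{1-s^2}}-\frac{s}{\sqrt{1-s^2}}=\arcsin s .
\]
Hence
\[
g''(s)=\frac{1}{\sqrt{1-s^2}} .
\]
On $\intr\Omega_\psi$ every $|s_e|<1$, so these formulas hold for each edge. With $\Delta_e=\arcsin s_e\in(-\pi/2,\pi/2)$ we have $\sqrt{1-s_e^2}=\cos\Delta_e>0$.

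\emph{Gradient.} Since $s_e=(F_p+\Cyc\psi)_e/K_e$ is affine in $\psi$, we get $\partial s_e/\partial\psi_k=\Cyc_{ek}/K_e$. The chain rule then gives
\[
\partial_{\psi_k}\Phi=\sum_e K_e\,g'(s_e)\,\frac{\Cyc_{ek}}{K_e}=\sum_e\Cyc_{ek}\arcsin s_e .
\]
This is exactly $\nabla\Phi=\Cyc^\top\arcsin(s)$.

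\emph{Hessian.} Differentiating once more,
\[
\partial^2_{\psi_k\psi_l}\Phi=\sum_e\Cyc_{ek}\,\frac{1}{\sqrt{1-s_e^2}}\,\frac{\Cyc_{el}}{K_e}
=\sum_e\frac{\Cyc_{ek}\Cyc_{el}}{K_e\cos\Delta_e}.
\]
In matrix form this is $\nabla^2\Phi=\Cyc^\top\operatorname{diag}(K_e\cos\Delta_e)^{-1}\Cyc$.

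\emph{Positive definiteness.} This is the only non-computational step. The diagonal matrix $D=\operatorname{diag}(K_e\cos\Delta_e)^{-1}$ has strictly positive entries on the interior, because $K_e>0$ and $\cos\Delta_e>0$. For any $x\in\R^c$,
\[
x^\top\Cyc^\top D\Cyc x=\sum_e D_{ee}(\Cyc x)_e^2\ge0 ,
\]
with equality only if $\Cyc x=0$. The columns of $\Cyc$ form a cycle basis, so $\Cyc$ has full column rank $c$, as already noted in Definition~\ref{def:phi}. Therefore $\Cyc x=0$ forces $x=0$, and $\nabla^2\Phi\succ0$.

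\emph{Conclusion and boundary remark.} Strict convexity of $\Phi$ on the convex open set $\intr\Omega_\psi$ follows from $\nabla^2\Phi\succ0$. Care is needed only at the boundary: there $g''$ blows up but $g$ remains continuous, and that behaviour is not part of this statement. If $c=0$ (a tree), the claims are vacuous.
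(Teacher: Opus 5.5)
Your proposal is correct and is the paper's proof written out in full: the paper also computes $g'(s)=\arcsin s$ and $g''(s)=(1-s^2)^{-1/2}>0$, applies the chain rule through the affine map $\psi\mapsto s$, substitutes $K_e\sqrt{1-s_e^2}=K_e\cos\Delta_e$, and gets positive definiteness from $\rank\Cyc=c$. One small imprecision, which the paper shares, is your claim that $|s_e|<1$ for every edge on $\intr\Omega_\psi$: this holds for every edge with a nonzero row of $\Cyc$, but a bridge with $|F_{p,e}|=K_e$ is saturated for all $\psi$ even when $\intr\Omega_\psi\neq\emptyset$, so there $\cos\Delta_e=0$ and the displayed formula makes sense only after dropping that bridge's term, which is constant in $\psi$ and does not affect $\nabla^2\Phi\succ0$.
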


\begin{proof}
$g'(s)=\arcsin s$, $g''(s)=(1-s^2)^{-1/2}>0$; differentiate and use
$K_e\sqrt{1-s_e^2}=K_e\cos\Delta_e$. Positive definiteness follows from
$\rank\Cyc=c$.
\end{proof}

The gradient $\nabla\Phi=\Cyc^\top\arcsin(s)$ is the vector of cycle sums of the principal angle map.  It becomes $2\pi$ times an integer winding vector only at a feasible stationary point of the corresponding tilted problem.  Existence in a fixed winding cell is therefore a stationarity question for a strictly convex function.

\paragraph{What a winding number means physically.}
Fix injections $p$.  Any two synchronous flow vectors satisfy the same conservation law $\Inc F=p$, so their difference lies in $\ker\Inc$ and is therefore a graph circulation.  The \emph{magnitude} of that circulation is not quantized.  What is quantized is the total principal phase advance around an oriented cycle:
$\Cyc_k^\top\Delta=2\pi\varpi_k$.  Thus distinct windings label distinct nonlinear equilibria with the same nodal injections but different cycle-space flow components \cite{coletta2016topologically,korsak1972}.

This topological use of ``loop flow'' should not be conflated with the operational power-system term for unscheduled parallel-path transfers, which can occur within the ordinary zero-winding branch.  Within the strictly cohesive set, two states with different winding vectors cannot be joined continuously while keeping every $|\Delta_e|<\pi/2$; a transition between winding cells must leave that set.  In the present paper $\varpi=0$ is not asserted to be universally ``the state an operator intends.''  It is the specific cell continuously connected to the small-angle/DC operating branch and is therefore the natural cell for the DCB comparison and for the transient experiment below.

\begin{theorem}[Cycle-coordinate realization of a winding cell]\label{thm:bij}
For $m\in\Z^c$ let $h_m(\psi)=\Phi(\psi)-2\pi m^\top\psi$, strictly convex on
$\Omega_\psi$.  Strictly cohesive solutions of \eqref{eq:sync} with winding vector $m$,
modulo a global phase, are in bijection with interior stationary points (equivalently,
interior minimizers) of $h_m$ on $\Omega_\psi$.  Hence there is at most one such state
per winding vector, consistently with the general winding-cell uniqueness theorem of
\cite{jafarpour2022torus}.  A state in the chosen cell exists exactly when the minimizer
lies in $\intr\Omega_\psi$.
\end{theorem}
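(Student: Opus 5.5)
The plan is to build the bijection explicitly in both directions and then read off uniqueness and the existence criterion from strict convexity (Proposition~\ref{prop:dual}).

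\emph{From a solution to a stationary point.} Let $\theta$ be a strictly cohesive solution of \eqref{eq:sync} with winding vector $m$. Set $F=K\circ\sin\Delta$. Then $\Inc F=p$, so $F-F_p\in\ker\Inc=\operatorname{range}\Cyc$. Hence $F=F_p+\Cyc\psi$ for a unique $\psi$, unique because $\Cyc$ has full column rank. Strict cohesion gives $|s_e|=|\sin\Delta_e|<1$, so $\psi\in\intr\Omega_\psi$, and also $\arcsin s_e=\Delta_e$ on the principal branch. By \eqref{eq:gradhess},
\[
\nabla h_m(\psi)=\Cyc^\top\Delta-2\pi m=0 .
\]
Shifting $\theta$ by a global phase leaves $\Delta$, and hence $\psi$, unchanged, so this map is well defined modulo phase.

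\emph{From a stationary point to a solution.} Let $\psi\in\intr\Omega_\psi$ with $\nabla h_m(\psi)=0$. Put $\Delta=\arcsin(s)\in(-\pi/2,\pi/2)^L$; stationarity says $\Cyc^\top\Delta=2\pi m$. The key step is to produce torus phases $\theta$ whose principal edge differences equal $\Delta$. I would do this by integrating along a spanning tree $T$. Fix $\theta_r=0$ at a root and, for each tree edge $e=(i,j)$, set $\theta_i-\theta_j=\Delta_e$; this determines $\theta$ uniquely. Each non-tree edge closes a fundamental cycle. Since the columns of $\Cyc$ form an integer basis of the cycle lattice (I take this as the meaning of ``oriented integer cycle basis''), each fundamental cycle is an integer combination of the columns of $\Cyc$. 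Consequently the signed sum of $\Delta$ around it is a multiple of $2\pi$. So on every non-tree edge, $\theta_i-\theta_j\equiv\Delta_e \pmod{2\pi}$.

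It follows that on every edge $\sin(\theta_i-\theta_j)=\sin\Delta_e=s_e$, and $\Delta_e$ is exactly the principal difference. Therefore $\Inc(K\circ\sin(\Inc^\top\theta))=\Inc F=p$, the state is strictly cohesive, and its winding vector is $\tfrac1{2\pi}\Cyc^\top\Delta=m$. The two constructions are mutually inverse: both directions reduce to $\Delta=\arcsin s$ and $F=K\circ\sin\Delta$.

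\emph{Main obstacle.} The delicate point is the lattice issue in this reconstruction. If $\Cyc$ were merely a rational basis, $\Cyc^\top\Delta\in2\pi\Z^c$ would not force every cycle sum into $2\pi\Z$. I would therefore state explicitly that $\Cyc$ is taken to be lattice-generating, as fundamental-cycle bases are, or simply choose the fundamental cycles of $T$ as the basis.

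\emph{Equivalence of stationary points and minimizers.} $h_m$ differs from $\Phi$ by a linear term, so it is strictly convex on $\Omega_\psi$ and $C^1$ on the interior. Hence interior stationary points are exactly interior minimizers, and there is at most one. This gives at most one strictly cohesive state per winding vector.

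\emph{Existence criterion.} $h_m$ is continuous on the compact set $\Omega_\psi$, so it has a minimizer there, and strict convexity makes it unique. If that minimizer lies in $\intr\Omega_\psi$, the bijection yields a state. Conversely, if a state exists, its $\psi$ is an interior stationary point and hence the global minimizer, so the minimizer lies in the interior. This proves that a state in the chosen cell exists exactly when the minimizer lies in $\intr\Omega_\psi$.
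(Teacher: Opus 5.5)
Your proof is correct and takes the same route as the paper: the unique decomposition $F=F_p+\Cyc\psi$ with $\nabla h_m(\psi)=\Cyc^\top\Delta-2\pi m=0$ in one direction, Kirchhoff integration of the principal differences in the other, and strict convexity plus compactness of $\Omega_\psi$ for uniqueness and the existence criterion. Your requirement that $\Cyc$ generate the integer cycle lattice (automatic for $m=0$, since $\Cyc^\top\Delta=0$ then forces every cycle sum to vanish) addresses a point that the paper's one-line appeal to Kirchhoff's angle law skips; one caveat remains, shared with the paper: your step $\Delta=\arcsin(s)\in(-\pi/2,\pi/2)^L$ needs $\intr\Omega_\psi$ to mean $\{\psi:|s_e|<1\ \forall e\}$, and this set differs from the topological interior only when a bridge is exactly saturated, $|F_{p,e}|=K_e$.
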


\begin{proof}
If a strictly cohesive state exists, its conservative flow can be written uniquely as
$F=F_p+\Cyc\psi$ and the principal differences satisfy
$\nabla h_m(\psi)=\Cyc^\top\Delta-2\pi m=0$.  Strict convexity makes this stationary
point the unique minimizer in that cell.  Conversely, an interior minimizer satisfies
$\Cyc^\top\arcsin(s^\star)=2\pi m$; Kirchhoff's angle law then integrates the principal
differences to nodal phases on the torus, unique modulo the uniform phase, and flow
conservation gives \eqref{eq:sync}.  This is a cycle-coordinate derivation of the same
winding-cell structure established more generally in \cite{jafarpour2022torus}.
\end{proof}

One does not have to solve infinitely many programs. Cohesion bounds the winding
numbers, and the bound is purely topological.

\begin{lemma}[Strict-cohesion winding bound]\label{lem:finite}
Let cycle $k$ of the chosen basis have length $\ell_k$.  Every strictly cohesive state satisfies
\begin{equation}\label{eq:mbound}
  |\varpi_k| < \frac{\ell_k}{4},
  \qquad\text{hence}\qquad
  |\varpi_k|\le \Big\lfloor\frac{\ell_k-1}{4}\Big\rfloor .
\end{equation}
Consequently the rectangular candidate box contains at most
$\prod_k\big(2\lfloor(\ell_k-1)/4\rfloor+1\big)$ integer winding vectors.  In particular,
a nonzero $|\varpi_k|=q$ requires at least $4q+1$ edges in that basis cycle.
\end{lemma}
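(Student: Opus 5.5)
The plan is to bound the cycle sum edge by edge using strict cohesion, and then to count the integers that remain admissible. By definition, $\varpi_k=\tfrac1{2\pi}\Cyc_k^\top\Delta$. Here $\Cyc_k$ is the $k$-th column of the oriented cycle-basis matrix, and $\Delta$ is the vector of principal edge differences. I will first take the basis cycles to be simple directed cycles, so that each column has exactly $\ell_k$ nonzero entries, each equal to $\pm1$. This is the reading under which ``length $\ell_k$'' makes sense. If a basis column were a general integer vector, I would replace $\ell_k$ by $\|\Cyc_k\|_1$ and the same argument would go through verbatim.

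First step: by the triangle inequality and strict cohesion,
\begin{equation*}
2\pi|\varpi_k| = \Big|\sum_{e}(\Cyc_k)_e\Delta_e\Big| \le \sum_{e\in\text{cycle }k}|\Delta_e| < \ell_k\cdot\frac{\pi}{2}.
\end{equation*}
The inequality is strict because every term satisfies $|\Delta_e|<\pi/2$ and there is at least one term. Dividing by $2\pi$ gives $|\varpi_k|<\ell_k/4$.

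Second step: convert this to the integer form. Since $\varpi_k\in\Z$ and $\ell_k\in\Z$, the strict inequality $4|\varpi_k|<\ell_k$ is equivalent to $4|\varpi_k|\le \ell_k-1$. Hence $|\varpi_k|\le\lfloor(\ell_k-1)/4\rfloor$. Reading this off with $|\varpi_k|=q$ gives $\ell_k\ge 4q+1$, which is the last claim.

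Third step: count. Each coordinate ranges over the integers in $[-N_k,N_k]$ with $N_k=\lfloor(\ell_k-1)/4\rfloor$, which gives $2N_k+1$ values per coordinate. The box of candidate winding vectors is a product of these ranges, so the count bound is this product.

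The main obstacle is only definitional: making sure $\ell_k$ correctly bounds the number of signed unit contributions in $\Cyc_k^\top\Delta$. I will address this by fixing the convention that the columns of $\Cyc$ are incidence vectors of oriented simple cycles, as in a fundamental cycle basis. Under that convention each edge of cycle $k$ appears exactly once with coefficient $\pm1$, and no further estimate is needed. Finiteness of the set of programs $h_m$ in Theorem~\ref{thm:bij} then follows immediately.
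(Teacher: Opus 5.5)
Your proposal is correct and follows the paper's proof: the triangle inequality plus strict cohesion gives $2\pi|\varpi_k|\le\sum_{e\in k}|\Delta_e|<\ell_k\pi/2$, and integrality of $\varpi_k$ and $\ell_k$ turns $|\varpi_k|<\ell_k/4$ into $|\varpi_k|\le\lfloor(\ell_k-1)/4\rfloor$. Two points in your version are left implicit in the paper: the counting step, and the remark that $\ell_k$ should be read as $\|\Cyc_k\|_1$ if the basis columns are not simple cycles.
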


\begin{proof}
For a strictly cohesive state,
$2\pi|\varpi_k|\le\sum_{e\in k}|\Delta_e|<\ell_k\pi/2$, so
$|\varpi_k|<\ell_k/4$.  Taking the largest integer strictly below $\ell_k/4$ gives
$\lfloor(\ell_k-1)/4\rfloor$.
\end{proof}

Eq.~(\ref{eq:mbound}) is only a topological candidate bound; feasibility still has to be decided cell by cell.  The finiteness of the winding partition and all-solutions algorithms are already treated in \cite{jafarpour2022torus}; here the bound is used only to make the cycle-coordinate program explicit for the sinusoidal specialization.  For a single ring it reproduces the familiar strict-cohesion count $2\lfloor(n-1)/4\rfloor+1$ \cite{delabays2016multistability,manik2017cycle}.

\begin{proposition}[Where $\Phi$ comes from]\label{prop:fenchel}
Let $f_e(\Delta)=-K_e\cos\Delta$, so the swing potential is
\begin{equation}\label{eq:pot}
  \Vpot(\theta)=-p^\top\theta+\sum_ef_e\big((\Inc^\top\theta)_e\big).
\end{equation}
Each $f_e$ is convex on $(-\pi/2,\pi/2)$, hence $\Vpot$ is convex on the cohesive
polytope $\Omega$ of \eqref{eq:omega}, and
$f_e^*(F)=F\arcsin(F/K_e)+\sqrt{K_e^2-F^2}=K_eg(F/K_e)$.  More precisely, the Fenchel dual of minimizing $\Vpot$ is the concave maximization of $-\sum_e f_e^*(F_e)$ subject to $\Inc F=p$; equivalently, after eliminating conservation with $F=F_p+\Cyc\psi$, it is the convex minimization of $\Phi(\psi)=\sum_e f_e^*(F_e)$.

\end{proposition}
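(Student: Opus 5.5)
The proof has three parts: convexity of each $f_e$ (and hence of $\Vpot$ on $\Omega$), the closed form of the conjugate $f_e^*$, and the Fenchel dual with conservation eliminated.

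\emph{Convexity.} We have $f_e''(\Delta)=K_e\cos\Delta$, which is nonnegative on $[-\pi/2,\pi/2]$ and strictly positive on the open interval. The map $\theta\mapsto(\Inc^\top\theta)_e$ is linear, so each summand $f_e((\Inc^\top\theta)_e)$ is convex on the slab $\{|\Delta_e|\le\pi/2\}$. The term $-p^\top\theta$ is linear. Since $\Omega$ is the intersection of these slabs, $\Vpot$ is convex on $\Omega$. Strict convexity holds only modulo the uniform phase direction $\mathbf 1$, because $\Inc^\top\mathbf 1=0$ and $\mathbf 1^\top p=0$; we note this but do not need it.

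\emph{Conjugate.} We take $f_e$ restricted to $[-\pi/2,\pi/2]$, and $+\infty$ outside, so that it is a closed proper convex function. We then compute
\[
f_e^*(F)=\sup_{|\Delta|\le\pi/2}\big(F\Delta+K_e\cos\Delta\big).
\]
The objective is concave in $\Delta$. Its stationarity condition is $F=K_e\sin\Delta$, which for $|F|\le K_e$ gives $\Delta=\arcsin(F/K_e)$. Substituting, and using $\cos\Delta=\sqrt{1-F^2/K_e^2}$ on the principal branch, yields $F\arcsin(F/K_e)+\sqrt{K_e^2-F^2}=K_e\,g(F/K_e)$. For $|F|>K_e$ the supremum is attained at an endpoint $\Delta=\pm\pi/2$ and has a finite affine value. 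This is a point to flag explicitly: the restriction is what makes the domain $\Omega_\psi$ natural. In the dual problem these $F$ are excluded by convention, with the flow constrained to $|F_e|\le K_e$, matching $\Omega_\psi$. Equivalently, we check via $g'=\arcsin$ that $(f_e^*)'(F)=\arcsin(F/K_e)$ is the inverse of $f_e'=K_e\sin$ on the principal branch.

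\emph{Fenchel dual.} We write the primal as
\[
\min_\theta\ -p^\top\theta+\sum_e f_e(\Delta_e)\quad\text{subject to}\quad \Delta=\Inc^\top\theta.
\]
We attach a multiplier $F\in\R^L$ to the constraint and form the Lagrangian
\[
-p^\top\theta+\sum_e f_e(\Delta_e)+F^\top(\Inc^\top\theta-\Delta).
\]
Minimizing over $\theta$ gives $-\infty$ unless $\Inc F=p$; this is the conservation constraint. Minimizing over $\Delta$ gives $-\sum_e f_e^*(F_e)$. The dual is therefore
\[
\max\ -\sum_e f_e^*(F_e)\quad\text{subject to}\quad \Inc F=p.
\]
The affine set $\{F:\Inc F=p\}$ is parametrized exactly by $F=F_p+\Cyc\psi$, because $\ker\Inc=\operatorname{range}\Cyc$ and $\Cyc$ has full column rank. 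Substituting turns the dual into minimizing $\sum_e K_e g(s_e)=\Phi(\psi)$ over $\Omega_\psi$.

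The main obstacle is not the algebra but stating the domain conventions cleanly, namely the restriction of $f_e$ to the cohesive interval and its effect on the conjugate for $|F|>K_e$. If strong duality is also wanted, a Slater-type condition would be invoked: the existence of a strictly cohesive $\theta$, i.e.\ a point of $\intr\Omega$. The claimed statement, however, only requires identifying the dual problem.
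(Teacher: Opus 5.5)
The paper gives no written proof of this proposition; it treats it as the standard conjugate computation, and your route is that computation. The steps are: $f_e''=K_e\cos\Delta\ge0$ on the closed interval for convexity, stationarity $F=K_e\sin\Delta$ for the conjugate, a multiplier $F$ on $\Delta=\Inc^\top\theta$ to produce $\Inc F=p$, and $F=F_p+\Cyc\psi$ to eliminate conservation. These steps, and the closed form of $f_e^*$ on $|F|\le K_e$, are correct.

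The one step that fails as written is that flows with $|F_e|>K_e$ are ``excluded by convention,'' and your explanation of it is inverted. With $f_e$ restricted to $[-\pi/2,\pi/2]$ you correctly get $f_e^*(F)=\tfrac{\pi}{2}|F|$ for $|F|\ge K_e$, which is finite. So the Fenchel dual of $\min_{\theta\in\Omega}\Vpot$ maximizes $-\sum_e f_e^*(F_e)$ over the whole affine space $F_p+\operatorname{range}\Cyc$, not over $\Omega_\psi$. At optimality $F_e\in K_e\sin\Delta_e^\star+N_{[-\pi/2,\pi/2]}(\Delta_e^\star)$, so $|F_e|>K_e$ exactly where a cohesion constraint is active with a positive multiplier.

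Truncating to $\Omega_\psi$ can change the problem. If $p\notin\{\Inc F:|F|\le K\}$, then $\Omega_\psi=\emptyset$. Yet $\min_\Omega\Vpot$ is finite, since $\Omega$ is compact modulo the uniform phase, and it equals the untruncated dual value. That is because $\theta=0\in\intr\Omega$ is always a Slater point, so the Slater condition you mention is automatic. A hard constraint on the primal side gives the conjugate linear growth; a bounded conjugate domain comes instead from Lipschitz (linear) growth on the primal side. The proposition leaves the domain of $f_e^*$ implicit, so this should be settled rather than stipulated.

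The clean repair is to continue $f_e$ outside the cohesive interval by its $C^1$ convex affine extension $\hat f_e(\Delta)=K_e(|\Delta|-\pi/2)$ for $|\Delta|>\pi/2$. Then $\hat f_e^*(F)=K_e g(F/K_e)$ for $|F|\le K_e$ and $+\infty$ otherwise. Your Lagrangian computation then shows that the Fenchel dual of minimizing the extended potential $\hat\Vpot$ over $\R^n$ is literally $\min_{\psi\in\Omega_\psi}\Phi(\psi)$. Since $\hat\Vpot$ is convex and equals $\Vpot$ on $\Omega$, a minimizer lying in $\intr\Omega$ of either problem is a minimizer of both. This is the strictly cohesive case used in \Cref{thm:bij}.

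Alternatively, keep your restricted $f_e$ and claim only that the dual objective agrees with $\Phi$ on $\Omega_\psi$. Then the truncated and untruncated duals share their optimizer whenever it is strictly cohesive, because an interior minimizer of the restricted problem is a local, hence global, minimizer of the convex untruncated objective.
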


\noindent
The separable flow objective in \Cref{prop:fenchel} is not new as a convexification of the lossless synchronization equations: Bent, Bienstock and Chertkov wrote the corresponding line-flow program explicitly in 2013 \cite{bent2013syncaware}.  The present use of $\Cyc$ merely eliminates the conservation constraints and exposes the cycle coordinates and winding tilt.  In optimization language this is a monotropic/separable convex-cost network-flow structure \cite{rockafellar1984network}.

\begin{remark}\label{rem:whydual}
The primal potential \eqref{eq:pot} is convex on $\Omega$, but an ordinary Newton--Raphson power-flow iteration is a root solver for the nonlinear balance equations; it should not be described as implicitly minimizing the convex potential.  One may instead solve the primal convex program directly in the zero-winding cell.  The cycle formulation is useful because conservation is eliminated and different winding cells appear as the linear tilts $-2\pi m^\top\psi$ of the same separable strictly convex cost.
\end{remark}

\begin{remark}[Relation to Tavora's conjectures]\label{rem:tavconj}
\Cref{thm:bij} decides membership of $p$ in the image $f(\Omega)$ of the cohesive
polytope, by a convex program. It does \emph{not} prove either convexity
conjecture of \cite{tavora1972}, which concern the larger principal region and its
image. Whether $f(\Omega)$ is itself convex is open. What is immediate is a convex
outer bound: any cohesive state has $|F_e|\le K_e$, so
$f(\Omega)\subseteq\{\Inc F:|F|\le K\}$, a zonotope. The obstruction to convexity
is the constraint $\Cyc^\top\arcsin(F/K)\in2\pi\Z^c$, which is exactly what
\Cref{thm:bij} disposes of by fixing the winding vector first.
\end{remark}

\section{From the D\"orfler--Chertkov--Bullo Screen to Exact Cell-Wise Cohesion}\label{sec:improve}

The relation to the 2013 synchronization condition is especially simple: the widely used screen and the exact zero-winding oracle minimize two different costs over the same affine flow space.  One cost is quadratic; the other retains the full integrated inverse-sine nonlinearity.  This viewpoint turns the comparison from a competition between unrelated criteria into an approximation hierarchy.

The reference point is \cite{dorfler2013synchronization}: if
$\|L_K^\dagger p\|_{E,\infty}\le\sin\gamma$ for some $\gamma<\pi/2$, where
$\|x\|_{E,\infty}=\max_{\{i,j\}\in E}|x_i-x_j|$, then a unique stable cohesive
state with $|\Delta_e|\le\gamma$ exists. We can now say exactly what that
condition is.

\begin{proposition}[It is the quadratic Taylor polynomial of $\Phi$]\label{prop:taylor}
$g(s)=1+\tfrac{s^2}2+\tfrac{s^4}{24}+O(s^6)$. Truncating $\Phi$ after the
quadratic term gives, up to the constant $\sum_eK_e$,
$\Phi_2(\psi)=\sum_eF_e^2/(2K_e)$, whose minimiser over $\{\Inc F=p\}$ is
Thomson's principle, i.e.\ the DC power flow
$F_{\rm DC}=\operatorname{diag}(K)\Inc^\top L_K^\dagger p$; and
$\max_e|F_{{\rm DC},e}|/K_e=\|L_K^\dagger p\|_{E,\infty}$. So
\cite{dorfler2013synchronization} is the box test $|F_e|\le K_e\sin\gamma$
evaluated at the minimiser of the quadratic Taylor polynomial of $\Phi$, where the
exact test evaluates it at the minimiser of $\Phi$.
\end{proposition}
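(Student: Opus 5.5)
The plan has four parts: the Taylor expansion of $g$, the identification of the truncated minimiser, the edgewise norm identity, and the resulting reinterpretation of the screen.

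\textbf{Step 1: expansion of $g$.} The expansion is read off from $g'(s)=\arcsin s$ (proof of \Cref{prop:dual}). The series $\arcsin s=s+\tfrac{s^3}{6}+O(s^5)$ integrates termwise. With $g(0)=1$ this gives $g(s)=1+\tfrac{s^2}{2}+\tfrac{s^4}{24}+O(s^6)$. Substituting $s_e=F_e/K_e$ into $\Phi=\sum_eK_eg(s_e)$ and truncating gives $\sum_eK_e+\sum_eF_e^2/(2K_e)$. Dropping the constant leaves $\Phi_2$. As a function of $\psi$ through $F=F_p+\Cyc\psi$, it is a strictly convex quadratic, because $\Cyc$ has full column rank.

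\textbf{Step 2: the minimiser of $\Phi_2$ is the DC flow.} Minimising $\Phi_2(\psi)$ over $\R^c$ is the same as minimising $\sum_eF_e^2/(2K_e)$ over the affine space $\{\Inc F=p\}$, since every conservative flow is uniquely $F_p+\Cyc\psi$. The constrained problem has the Lagrange/KKT condition $\operatorname{diag}(K)^{-1}F=\Inc^\top\theta$ for some multiplier $\theta\in\R^n$. Equivalently, $\Cyc^\top\operatorname{diag}(K)^{-1}F=0$, using $\ker\Cyc^\top=\operatorname{range}\Inc^\top$ (orthogonality of cycle and cut spaces). Hence $F=\operatorname{diag}(K)\Inc^\top\theta$. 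Conservation then gives $L_K\theta=p$. Since $\mathbf 1^\top p=0$ and $G$ is connected, $p\in\operatorname{range}L_K$, so $\theta=L_K^\dagger p+\alpha\mathbf 1$. The uniform shift is killed by $\Inc^\top$, which yields $F_{\rm DC}$. Strict convexity makes this the unique minimiser; this is Thomson's principle.

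\textbf{Step 3: the norm identity.} For $e=(i,j)$ we have $F_{{\rm DC},e}/K_e=(\Inc^\top L_K^\dagger p)_e=(L_K^\dagger p)_i-(L_K^\dagger p)_j$. Taking the maximum of absolute values over edges gives exactly $\|L_K^\dagger p\|_{E,\infty}$.

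\textbf{Step 4: reinterpretation of the screen.} The DCB hypothesis $\|L_K^\dagger p\|_{E,\infty}\le\sin\gamma$ is therefore literally the box test $|F_e|\le K_e\sin\gamma$ at $F_{\rm DC}$. By \Cref{thm:bij} with $m=0$, the exact test is the same box test (with $\gamma\uparrow\pi/2$, i.e.\ interior membership) at the minimiser of $\Phi$.

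\textbf{Main obstacle.} The only delicate point is Step 2's identification of the stationarity condition with a potential flow. This needs the cut/cycle orthogonality $\ker\Cyc^\top=\operatorname{range}\Inc^\top$. It follows from $\Inc\Cyc=0$ together with the dimension count $\rank\Cyc=L-n+1$ and $\rank\Inc^\top=n-1$. The remaining steps are bookkeeping.
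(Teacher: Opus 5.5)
Your proposal is correct and follows essentially the same route as the paper's proof. You read off the Taylor coefficients of $g$ at $0$ (you integrate the $\arcsin$ series, where the paper lists $g^{(k)}(0)$ directly), then identify the minimiser of the quadratic truncation as the weighted least-squares/Thomson current distribution with conductances $K_e$, and in doing so you spell out the KKT step, the cut--cycle orthogonality, and the edgewise norm identity that the paper leaves implicit.
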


\begin{proof}
$g(0)=1$, $g'(0)=0$, $g''(0)=1$, $g'''(0)=0$, $g''''(0)=1$. Minimising $\Phi_2$
subject to $\Inc F=p$ is least squares in the metric
$\operatorname{diag}(K)^{-1}$, solved by the current distribution of the resistive
network with conductances $K_e$.
\end{proof}

This isolates precisely the distinction relevant to the \emph{selected zero-winding cell}.  First, the nonlinear map from flow to angle is $\arcsin$, and \cite{dorfler2013synchronization} already treats the target angle bound exactly through $\sin\gamma$; there is no small-angle replacement $\sin\gamma\mapsto\gamma$ in the stated DCB screen.  Second, the two constructions test different points of the affine flow space: DCB uses $F_{\rm DC}$, the minimizer of the quadratic surrogate $\Phi_2$, whereas the zero-winding synchronous solution, when it exists cohesively, is determined by $F^\star$, the minimizer of the full $\Phi$.  Under a small-loading scaling $p\mapsto\varepsilon p$, the first nonlinear correction to the cycle coordinate appears at cubic order in $\varepsilon$; on a tree there is no cycle coordinate and the two flows coincide identically.

Accordingly, the improvement claimed here is cell-specific: for a fixed winding vector, replacing the quadratic surrogate by the full strictly convex cost turns the DCB-style sufficient screen into an exact strict-cohesion test for that cell.  For the zero-winding cell this costs one strictly convex program in $c=L-n+1$ variables instead of one linear solve.  Deciding whether \emph{any} cohesive equilibrium exists across all windings still requires checking the finite set of candidate winding cells from \Cref{lem:finite}; that all-solutions problem is prior art \cite{jafarpour2022torus}.  Three corollaries make the zero-winding difference predictable before it is measured.

\begin{corollary}[Trees]\label{cor:tree}
If $G$ is acyclic then $c=0$, $\Omega_\psi$ is a point, $F_{\rm DC}=F^\star$
identically, and the 2013 condition is exact with no approximation of any kind.
\end{corollary}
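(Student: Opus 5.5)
The plan rests on two facts: a tree has an empty cycle space, so the affine flow space collapses to one point, and both the DC flow and the exact flow must be that point.

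\emph{Step 1 (empty cycle space).} A connected acyclic graph on $n$ vertices has $L=n-1$ edges. Hence $c=L-n+1=0$, and the cycle-basis matrix $\Cyc$ has no columns. Equivalently, $\Inc^\top$ has full column rank on a tree, so $\ker\Inc=\{0\}$. Rank--nullity gives $\rank\Inc=n-1=L$, using only connectedness. It follows that $\Inc F=p$ has at most one solution. Since $\mathbf 1^\top p=0$ places $p$ in $\operatorname{range}\Inc$, it has exactly one, namely $F_p$.

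\emph{Step 2 (both optimizers are that point).} The parametrization $F=F_p+\Cyc\psi$ becomes $F=F_p$, so $\Omega_\psi$ is either the single point $\psi\in\R^0$ or empty. It is the point when $|F_p|\le K$ componentwise. By \Cref{prop:taylor}, $F_{\rm DC}$ minimizes $\Phi_2$ over $\{\Inc F=p\}$, and $F^\star$ minimizes $\Phi$ over the same set. Both are feasible points of a singleton affine set, so $F_{\rm DC}=F_p=F^\star$ whenever $F^\star$ is defined, independently of the cost. One can also check the formula directly: $\Inc\operatorname{diag}(K)\Inc^\top L_K^\dagger p=L_KL_K^\dagger p=p$, because $p\perp\mathbf 1=\ker L_K$.

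\emph{Step 3 (exactness of the test).} There are no winding vectors, since $\Z^0$ is trivial, and the cycle condition in \Cref{thm:bij} is vacuous. So a strictly cohesive solution exists iff $|F_{p,e}|<K_e$ for all $e$, with $\Delta_e=\arcsin(F_{p,e}/K_e)$. On a tree these edge differences always integrate to nodal phases. A solution with $|\Delta_e|\le\gamma$ therefore exists iff $\max_e|F_{{\rm DC},e}|/K_e\le\sin\gamma$. By \Cref{prop:taylor} this quantity equals $\|L_K^\dagger p\|_{E,\infty}$. The 2013 condition is thus both necessary and sufficient.

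The only delicate point is the meaning of ``exact'': the quantities involved are equal identically, not merely to some order in $\varepsilon$. This follows because feasibility of the flow is forced by conservation alone, so the Taylor truncation of $g$ never enters.
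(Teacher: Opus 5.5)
Your proof is correct and follows the paper's own one-line argument: $c=0$ collapses the affine flow space $\{F_p+\Cyc\psi\}$ to the single point $F_p$, so the quadratic and full minimizers coincide and the DCB box test is evaluated at the true flow. You add useful detail the paper leaves implicit, namely the necessity direction, the direct check $L_KL_K^\dagger p=p$, and the observation that $\Omega_\psi$ is empty rather than a point when some $|F_{p,e}|>K_e$. One small slip: on a tree it is $\Inc$ that has full column rank (equivalently, $\Inc^\top$ has full row rank), not $\Inc^\top$; your subsequent count $\rank\Inc=n-1=L$ is the correct statement.
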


This is a one-line proof of tree-exactness, and more usefully it identifies the
cycle dimension $c=L-n+1$ as the precise measure of what is lost off trees.

A \emph{bridge} (or cut edge) is an edge whose removal disconnects $G$;
equivalently, an edge belonging to no cycle; equivalently, an edge whose row in any
cycle-basis matrix $\Cyc$ is identically zero. A bridge is a cut-set with a single
element.

\begin{corollary}[Bridges]\label{cor:bridge}
If $e$ is a bridge then row $e$ of $\Cyc$ is zero, so $F_e=F_{p,e}$ for every
$\psi$: no cycle flow can relieve a bridge. The exact test and the 2013 condition
therefore agree exactly whenever the binding constraint sits on a bridge, and can
differ only when it sits on a cycle-carrying edge.
\end{corollary}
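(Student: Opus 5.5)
The argument has three parts: a graph-theoretic fact about bridges, its consequence for the flow on $e$, and a comparison of the two tests at the binding constraint.

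\emph{Step 1: the row of $\Cyc$ for a bridge is zero.} Each column of $\Cyc$ lies in $\ker\Inc$, so it is a circulation. Take the cut $(S,V\setminus S)$ obtained by deleting $e$. Summing the conservation equations $\Inc\,\Cyc_{:,k}=0$ over the nodes in $S$ cancels every edge with both ends in $S$. Only the edges crossing the cut survive, and $e$ is the only such edge, so the sum reduces to $\pm\Cyc_{ek}=0$. This holds for every column, hence for any cycle basis and not just a fundamental one.

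\emph{Step 2: the flow on $e$ is fixed.} It follows that $(F_p+\Cyc\psi)_e=F_{p,e}$ for every $\psi\in\R^c$. So $F_e$ takes the same value on the whole affine space $\{\Inc F=p\}$. The same cut argument applied to $\Inc F=p$ shows $F_e=\pm\sum_{i\in S}p_i$, which does not depend on the choice of $F_p$. In particular $F_{{\rm DC},e}=F^\star_e$, where $F_{\rm DC}$ is the minimiser of $\Phi_2$ from \Cref{prop:taylor} and $F^\star$ is the minimiser of $\Phi$ (or of any tilt $h_m$). Consequently the capacity constraint $|s_e|\le1$ defining $\Omega_\psi$ does not involve $\psi$ at all.

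\emph{Step 3: agreement of the two tests.} I read ``binding constraint'' as the edge attaining $\max_e|F_e|/K_e$ (equivalently, the edge whose constraint fails first as the loading is increased). Suppose this maximiser is a bridge $e$ for both $F_{\rm DC}$ and $F^\star$. By Step 2 both tests evaluate the same number $|F_{p,e}|/K_e$ against the same threshold, so they return the same verdict and the same loading limit.

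The main point of care is whether the maximising edge is the same for both flows. Three observations handle it.
\begin{itemize}
\item If the DCB test fails on a bridge, then $|F_{p,e}|/K_e>\sin\gamma$. This violation is independent of $\psi$, so the exact test fails as well.
\item In the other direction, the exact test is never stricter than DCB: whenever DCB certifies, the cohesive minimiser $F^\star$ exists.
\item In the loading parametrisation $p\mapsto\lambda p$, suppose the bridge is the edge that becomes limiting for the exact test. It reaches its limit at the same value of $\lambda$ for both tests, because its flow is $\lambda F_{p,e}$ in both. For the threshold used by the exact test ($|s_e|<1$), the two thresholds therefore coincide.
\end{itemize}

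The converse statement, that the tests can differ only on cycle-carrying edges, is the contrapositive. Any discrepancy requires the binding edge to have $F_{{\rm DC},e}\neq F^\star_e$, and by Step 2 this forces row $e$ of $\Cyc$ to be nonzero.
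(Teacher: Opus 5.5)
Your Steps 1 and 2 are sound, and they are more complete than the paper. The paper lists ``row $e$ of $\Cyc$ is identically zero'' among the equivalent definitions of a bridge and reads $F_e=F_{p,e}$ off directly. Your cut-sum argument actually proves that equivalence, for any basis of $\ker\Inc$. Your first two bullets in Step 3 are also correct, and together they give the right version of the claim. If the edge attaining $\max_e|F_{{\rm DC},e}|/K_e$ is a bridge, then DCB passing implies the exact test passes, by DCB sufficiency. DCB failing on that bridge forces the exact test to fail. So the verdicts and the loading thresholds coincide.

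The third bullet, however, is false, and so is the contrapositive as you phrase it. The bridge reaching its own limit at the same $\lambda$ in both flows only shows that it caps both thresholds. The DCB threshold is a minimum over all edges, so a cycle edge can set it earlier, even though the nonlinear flow later relieves that edge.

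Here is a concrete case:
\begin{itemize}
\item Take a triangle on buses $1,2,3$ with $K_{12}=K_{13}=K_{32}=1$, a pendant bridge $(2,4)$ with $K_{24}=1.6$, and $p=\lambda(1,0,0,-1)^\top$.
\item The bridge carries $\lambda$ in every conservative flow and saturates at $\lambda=1.6$.
\item In the triangle the DC flow has $s_{12}=\lambda/1.5$. So $\lambda_{\rm DCB}=1.5$, with binding edge $(1,2)$.
\item The zero-winding nonlinear solution has $\Delta_{13}=\Delta_{32}=\Delta_{12}/2$ and $\lambda=\sin\Delta_{12}+\sin(\Delta_{12}/2)$. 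It remains strictly cohesive in the triangle up to $\lambda=1+1/\sqrt2\approx1.707$.
\item Hence $\lambda_{\rm coh}=1.6$, with the bridge as the only saturated edge, yet $\lambda_{\rm DCB}=1.5$.
\end{itemize}

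So ``binding constraint'' in the corollary must be read as the binding edge of the DC flow, or as a bridge binding for both flows, as in case39. The correct contrapositive is that a discrepancy forces the DC-binding edge to be cycle-carrying. The exact-binding edge can be a bridge with $F_{{\rm DC},e}=F^\star_e$, as the example shows, so your sentence ``any discrepancy requires the binding edge to have $F_{{\rm DC},e}\neq F^\star_e$'' fails for that edge. Drop the third bullet, or weaken it to $\lambda_{\rm DCB}\le\lambda_{\rm coh}=K_e/|F_{p,e}|$, and state the reading explicitly.
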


\begin{corollary}[A binding bridge makes the Jacobian singular]\label{cor:bridgefold}
Suppose at the strict-cohesion boundary the only saturated edge $e$ is a bridge, so
$|\Delta_e|=\pi/2$ while $|\Delta_{e'}|<\pi/2$ for $e'\ne e$.  Then the metagraph
weight $K_e\cos\Delta_e$ vanishes; removing that positive-weight connection disconnects
the metagraph, and the stiffness/Jacobian acquires a second zero eigenvalue.  Thus the
strict-cohesion boundary is a Jacobian singularity in this case.  Calling that point a
\emph{saddle-node} additionally requires the usual nondegeneracy/transversality conditions,
which are not proved here.
\end{corollary}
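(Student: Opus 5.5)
The plan is to identify the Jacobian of the synchronous balance map \eqref{eq:sync} with the metagraph Laplacian and then count its zero eigenvalues via connectivity. Differentiating $\theta\mapsto \Inc(K\circ\sin(\Inc^\top\theta))$ at the boundary state gives
\[
J=\Inc\operatorname{diag}(K\circ\cos\Delta)\Inc^\top ,
\]
which is exactly the stiffness matrix obtained when linearising \eqref{eq:swing} in \Cref{sec:setting}. Since $|\Delta_e|=\pi/2$, the weight $w_e=K_e\cos\Delta_e$ vanishes. For every other edge, $|\Delta_{e'}|<\pi/2$ gives $w_{e'}=K_{e'}\cos\Delta_{e'}>0$. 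So $J$ is a weighted graph Laplacian with nonnegative weights, and its positive-weight support graph is $G\setminus\{e\}$.

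Next I would use the standard fact that the kernel of a Laplacian with nonnegative weights consists of the vectors constant on each connected component of the positive-weight subgraph. The proof is short:
\[
x^\top Jx=\sum_{e'} w_{e'}(x_i-x_j)^2 ,
\]
so $Jx=0$ forces $x_i=x_j$ across every positive-weight edge. Because $e$ is a bridge, $G\setminus\{e\}$ has exactly two components, $V_1$ and $V_2$. Hence $\ker J=\operatorname{span}\{\mathbf 1_{V_1},\mathbf 1_{V_2}\}$ is two-dimensional, and $J$ has a zero eigenvalue in addition to the one along $\mathbf 1$ (the uniform phase). Restricted to the quotient modulo the uniform phase, or equivalently the reduced Jacobian acting on $\mathbf 1^\perp$, $J$ is therefore singular. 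The nonzero vector $\mathbf 1_{V_1}-\tfrac{|V_1|}{n}\mathbf 1$ spans the new null direction, which corresponds to rigidly rotating one side of the bridge against the other. This is the physically expected picture, since the bridge flow $F_e=K_e\sin\Delta_e$ is stationary in $\Delta_e$ at $\pm\pi/2$.

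I would close by noting that the stiffness matrix of \eqref{eq:swing} and the power-flow Jacobian are the same matrix, so both statements in the corollary follow simultaneously. For the saddle-node remark, the argument stops here: it exhibits only the singular direction and explicitly makes no claim about transversality or the quadratic nondegeneracy coefficient. I would add one line observing that the quadratic coefficient is not automatically zero, because $\partial^2 F_e/\partial\Delta_e^2=-K_e\sin\Delta_e=\mp K_e\neq 0$, but that verifying the full set of saddle-node conditions is outside the scope.

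The only step needing care is the kernel claim, specifically that edges with zero weight impose no constraint while the strictly positive weights on $G\setminus\{e\}$ enforce constancy on each component. The fact that $e$ is the \emph{only} saturated edge is exactly what guarantees there are two components rather than more, so the second zero eigenvalue is simple. I expect no real obstacle beyond stating this cleanly.
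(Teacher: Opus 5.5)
Your proposal is correct and follows the paper's own argument. The bridge weight $K_e\cos\Delta_e$ vanishes, so the positive-weight metagraph is $G$ with the bridge removed and therefore has two components, which makes the Laplacian-type Jacobian $\Inc\operatorname{diag}(K\circ\cos\Delta)\Inc^\top$ have a two-dimensional kernel. What you add is the explicit kernel characterization via $x^\top Jx=\sum_{e'} w_{e'}(x_i-x_j)^2$, which the paper leaves implicit; your side remark on the nonzero quadratic coefficient is accurate, and, like the paper, you correctly stop short of claiming a saddle-node.
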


In the language of the classical power-flow geometry \cite{tavora1972}, a bridge reaching
quadrature produces a singular point on the boundary of the cohesive polytope.  
The MATPOWER case39 numerics below show precisely this Jacobian singularity; we do not infer a generic bifurcation type from singularity alone.

\begin{remark}[No coordinate-wise ordering follows from the objective ordering]\label{rem:dir}
Pointwise, $g(s)-1\ge s^2/2$, so the nonlinear separable objective lies above its quadratic
surrogate.  This fact alone does \emph{not} imply an ordering of the corresponding minimizers
in $\ell_\infty$, nor does it prove that every line flow moves away from saturation.  The DCB
criterion is safe because it is a proved sufficient condition \cite{dorfler2013synchronization},
not because of a coordinate-wise comparison between $F_{\rm DC}$ and $F^\star$.
\end{remark}

\Cref{cor:bridge,cor:bridgefold} are the sharpest falsifiable claims here, and
\Cref{sec:num} tests them on two systems that fall on opposite sides.

\section{From Equilibrium Existence to a Transient Energy Certificate}\label{sec:trans}

The static oracle answers whether a prescribed post-event operating point exists in the selected cohesive winding cell.  It does not synthesize a response policy and it does not prove that the physical trajectory reaches that point.  The transient layer therefore takes the balancing allocation as prescribed data and asks a separate question: whether the pre-event state lies inside a provably invariant post-event energy sublevel.

Existence is necessary and not sufficient. We now certify that the machines
actually reach the new equilibrium.

\paragraph{The disturbance actually simulated.} The proof-of-principle experiment below does \emph{not} model a raw load rejection followed in time by primary-frequency response.  Instead, at $t=0$ the injection vector is changed instantaneously and in a balanced way,
\begin{equation}\label{eq:peff}
  p_{\rm eff}=p+\Delta P\,(e_\delta-\gamma),
  \qquad \gamma\ge0,\quad \mathbf1^\top\gamma=1.
\end{equation}
The first term removes a load block at bus $\delta$; the second term applies the balancing generation reduction \emph{at the same instant}. Eq.~\eqref{eq:peff} should be read as the zero-delay limit of a \emph{pre-provisioned} balancing scheme.  The vector $\gamma$ specifies the spatial participation pattern that droop, governor, converter or other fast controls are intended eventually to realize; here their finite response time is collapsed to zero so that the balancing action is present at the instant the load disappears.  This is an optimistic idealization: it removes the bulk active-power mismatch and therefore suppresses the frequency excursion that would normally trigger primary response.  It is nevertheless nontrivial because the spatial injection pattern changes discontinuously.  Even with perfect instantaneous balancing in aggregate, the resulting rotor-angle transient can cross the cohesive boundary.  Failure of the certificate under this favorable model is therefore a strong warning; success is only a statement about the idealized limit and not a guarantee for a staged droop/governor response. Thus $\mathbf1^\top p_{\rm eff}=0$ and there is no modeled bulk-frequency excursion, governor lag, dead band or droop trajectory.  The vector $\gamma$ is best read as a spatial balancing-allocation parameter.  Varying $\gamma$ while holding the old operating point and $\Delta P$ fixed isolates a geometric question: how strongly does the transient certificate depend on where the compensating injection change is placed?

This idealization is intentionally optimistic and is not a claim about the temporal behavior of a data-center protection system or generator controls.  A physically staged load-loss/governor model is a natural extension, but it is outside the experiment reported here.

\paragraph{The certificate.} Let $\theta^\star$ be the strictly cohesive equilibrium for $p_{\rm eff}$ and let $\Vpot$
be \eqref{eq:pot} with $p$ replaced by $p_{\rm eff}$.  After the instantaneous balanced
step the swing dynamics are simply \eqref{eq:swing} with injections $p_{\rm eff}$ and
initial state $(\theta_{\rm old},\omega=0)$.  Hence
\begin{align}
  W(\theta,\omega)
  &=\tfrac12\omega^\top M\omega
    +\Vpot(\theta)-\Vpot(\theta^\star),\label{eq:lyap}\\
  \dot W&=-\omega^\top\!\operatorname{diag}(d)\,\omega\le0 .
\end{align}
and at the instant of the step
\begin{equation}\label{eq:W0}
  W_0=\Vpot(\theta_{\rm old})-\Vpot(\theta^\star).
\end{equation}
No center-of-inertia subtraction or steady-frequency offset is needed for the experiment as actually implemented.

Because $\mathbf1^\top p_{\rm eff}=0$, the potential and the equilibrium equations are invariant under a uniform phase shift.  The stability statement below is therefore understood on the phase quotient $\R^n/\operatorname{span}\{\mathbf1\}$ (equivalently, after fixing one angle when evaluating the static energy landscape).  Convergence of $\theta$ means convergence modulo this uniform phase.

\begin{theorem}[Transient certificate]\label{thm:trans}
Let $\Omega$ be as in \eqref{eq:omega} with $\theta^\star\in\intr\Omega$, let
$d_i>0$, and put
\begin{equation}\label{eq:cstar}
  c^\star=\min\big\{\Vpot(\theta)-\Vpot(\theta^\star)\ :\ \theta\in\partial\Omega\big\}.
\end{equation}
If $W_0<c^\star$ then the solution of \eqref{eq:swing} from $(\theta_{\rm old},0)$
stays cohesive for all $t\ge0$, satisfies $\omega(t)\to0$, and its relative angles converge to the equilibrium class $[\theta^\star]$ on the uniform-phase quotient.
\end{theorem}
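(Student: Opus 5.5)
The plan is a standard direct-method (LaSalle) argument on the phase quotient, with the invariant set taken to be the sublevel set of $W$ intersected with the cohesive slab. Work on $X=(\R^n/\operatorname{span}\{\mathbf1\})\times\R^n$. Here $\Vpot$, $\Omega$ and the dynamics \eqref{eq:swing} with $p_{\rm eff}$ all descend to the quotient because $\mathbf1^\top p_{\rm eff}=0$ and $\Inc^\top\mathbf1=0$. Define
\[
S=\{(\theta,\omega):\theta\in\Omega,\ W(\theta,\omega)<c^\star\}.
\]
Note that $W_0<c^\star$ implicitly forces $\theta_{\rm old}\in\intr\Omega$. I would state this as part of the hypothesis: $\theta_{\rm old}$ is cohesive, as it is for a pre-event cohesive operating point. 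Then $(\theta_{\rm old},0)\in S$.

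\textbf{Step 1 (invariance).} Along trajectories $\dot W=-\omega^\top\operatorname{diag}(d)\omega\le0$, so $W(t)\le W_0<c^\star$ for all $t$. Suppose the trajectory leaves $\Omega$. By continuity there is a first time $t_1$ with $\theta(t_1)\in\partial\Omega$. Then
\[
W(t_1)\ge \Vpot(\theta(t_1))-\Vpot(\theta^\star)\ge c^\star,
\]
using $\tfrac12\omega^\top M\omega\ge0$ and the definition \eqref{eq:cstar}. This contradicts $W(t_1)<c^\star$. Hence $\theta(t)\in\intr\Omega$ for all $t\ge0$, and the state stays in $S$.

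\textbf{Step 2 (compactness).} On the quotient, $\Omega$ is compact, since $\Inc^\top$ is injective modulo $\mathbf1$ on a connected graph, so the edge differences bound all relative angles. The kinetic term is bounded by $c^\star-\min_\Omega(\Vpot-\Vpot(\theta^\star))$, so $\omega$ is bounded too. Thus the forward orbit is precompact, solutions exist globally, and the $\omega$-limit set is nonempty, compact and invariant, and contained in the closure of $S$.

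\textbf{Step 3 (LaSalle).} By LaSalle's invariance principle, the $\omega$-limit set lies in the largest invariant subset of $\{\dot W=0\}=\{\omega=0\}$, using $d_i>0$. On an invariant set with $\omega\equiv0$ we have $\dot\omega=0$, so $\Inc(K\circ\sin\Inc^\top\theta)=p_{\rm eff}$, i.e. $\theta$ is an equilibrium lying in $\Omega$ with $W\le W_0<c^\star$. In particular it is not on $\partial\Omega$, again by the definition of $c^\star$.

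\textbf{Step 4 (uniqueness).} On $\intr\Omega$ the potential $\Vpot$ is strictly convex on the quotient, by Proposition~\ref{prop:fenchel}: its Hessian is the metagraph Laplacian with all weights $K_e\cos\Delta_e>0$, positive definite modulo $\mathbf1$. So its only critical point in $\intr\Omega$ is $\theta^\star$. Alternatively, this follows from Theorem~\ref{thm:bij} with $m=0$ in the fixed real lift. Hence the $\omega$-limit set is $\{([\theta^\star],0)\}$, giving $\omega(t)\to0$ and $[\theta(t)]\to[\theta^\star]$.

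\textbf{Main obstacle.} The expected difficulty is not any single estimate but the bookkeeping on the quotient and the lift. One must ensure that the real lift of the zero-winding cell is preserved, which is automatic here since $\theta(t)$ is a continuous real trajectory inside the convex set $\Omega$. One must also ensure that $c^\star$ is attained and finite, which holds because $\partial\Omega$ is compact in the quotient and $\Vpot$ is continuous. Finally, the first-exit argument needs the strict inequality $W_0<c^\star$ together with cohesion of $\theta_{\rm old}$. I would state that last requirement explicitly rather than rely on it being implicit.
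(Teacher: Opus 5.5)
Your argument is correct and is essentially the paper's proof: the same first-exit contradiction $W(t_1)\ge \Vpot(\theta(t_1))-\Vpot(\theta^\star)\ge c^\star>W_0$, followed by LaSalle on the phase quotient, with uniqueness of the critical point of the convex potential in $\intr\Omega$. Your additions are details the paper uses tacitly: the precompactness of the orbit needed for LaSalle, and the explicit hypothesis $\theta_{\rm old}\in\Omega$ (note that $W_0<c^\star$ by itself does not imply cohesion of $\theta_{\rm old}$ on the real lift, so stating it as a hypothesis, as you do, is the right call).
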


\begin{proof}
By \Cref{prop:fenchel}, $\Vpot$ is convex on $\Omega$ and, after fixing the uniform-phase gauge, $\theta^\star$ is its unique critical point.  Hence $W\ge0$ on the quotient phase space, with equality only at the equilibrium class $([\theta^\star],0)$, and $\dot W\le0$. Suppose the trajectory first reaches
$\partial\Omega$ at time $t_1$. Then
$W(t_1)\ge\Vpot(\theta(t_1))-\Vpot(\theta^\star)\ge c^\star>W_0\ge W(t_1)$, a
contradiction; so the trajectory remains in $\intr\Omega$. There $W$ is a
Lyapunov function, 
and, on the phase quotient, the largest invariant set in $\{\dot W=0\}=\{\omega=0\}$ is the equilibrium class $[\theta^\star]$.  LaSalle's invariance principle therefore gives convergence of the relative angles to $\theta^\star$ and $\omega\to0$.

\end{proof}

The certificate does not involve the damping magnitude: only $d\succ0$ enters, via
the sign of $\dot W$. Damping moves the true threshold but never the certified
one, which makes the bound robust to the least reliable parameter in the model.

The classical difficulty with direct methods is computing $c^\star$: it is usually
approached by searching for a ``controlling unstable equilibrium'', a nonconvex
problem with no guarantee that the right one was found
\cite{chiang1995direct,athay1979practical}. For one and two loops the difficulty
can be sidestepped by projecting the Lyapunov function onto the plane of the loop
angles and reading minima and saddles off a contour plot, as in
\cite{coletta2016topologically}; that is exact in two dimensions and does not
survive to a meshed network with $c=62$ independent cycles. Convexity removes the
search without removing the guarantee.

\begin{theorem}[The critical energy is a finite family of convex programs]\label{thm:cstar}
$\partial\Omega=\bigcup_{e=1}^{L}\bigcup_{s=\pm1}F_{e,s}$ with
$F_{e,s}=\Omega\cap\{\Delta_e=s\pi/2\}$. Each $F_{e,s}$ is a compact convex set on
which $\Vpot$ is convex, so
\begin{equation}
  c^\star=\min_{e,s}\ \Big[\min_{\theta\in F_{e,s}}\Vpot(\theta)\Big]-\Vpot(\theta^\star)
\end{equation}
is the minimum over at most $2L$ nonempty signed-edge face problems (empty or redundant faces may be ignored).  The face decomposition is mathematically exact; each numerical face minimum is computed only to the tolerance of the chosen convex solver.
\end{theorem}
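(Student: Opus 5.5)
The plan is to treat the statement as three separate claims and prove them in turn: the face decomposition of $\partial\Omega$, compactness and convexity of each face, and the resulting min-of-mins formula for $c^\star$. Throughout I work on the uniform-phase quotient, concretely by fixing $\theta_n=0$. This is legitimate because $\mathbf 1^\top p_{\rm eff}=0$ makes $\Vpot$ and the constraints $\Delta=\Inc^\top\theta$ invariant under $\theta\mapsto\theta+t\mathbf 1$. On the slice $\{\theta_n=0\}$, the map $\theta\mapsto\Inc^\top\theta$ is injective because $G$ is connected and $\ker\Inc^\top=\operatorname{span}\{\mathbf 1\}$. Hence $\Omega$ restricted to the slice is a polyhedron cut out by the $2L$ linear inequalities $\pm\Delta_e\le\pi/2$. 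It is bounded: along a spanning-tree path to bus $n$, each $|\theta_i|$ is at most $(n-1)\pi/2$. So $\Omega$ is a compact convex polytope. Its interior, relative to the slice, is the set where all $2L$ inequalities are strict; it is nonempty since $\theta=0$ belongs to it.

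For the decomposition, I use the standard fact that for a full-dimensional polyhedron $\{x:a_k^\top x\le b_k\}$ with nonempty strict-inequality set, the boundary is exactly the set of points at which at least one inequality is active. One inclusion holds by continuity: a point where every inequality is strict has a neighbourhood inside $\Omega$. For the other, if $a_k^\top x=b_k$, then moving along $a_k$ leaves $\Omega$ immediately, so $x$ is not interior. Applied to $\Omega$, this gives $\partial\Omega=\bigcup_{e,s}F_{e,s}$ with $F_{e,s}=\Omega\cap\{s\Delta_e=\pi/2\}$. Each $F_{e,s}$ is the intersection of the compact convex set $\Omega$ with an affine hyperplane, so it is compact and convex, possibly empty. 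Faces that are empty, or contained in other faces, can be dropped without changing the union.

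Convexity of $\Vpot$ on $\Omega$ comes from \Cref{prop:fenchel}. Here there is one subtlety that I expect to be the main point needing care. \Cref{prop:fenchel} asserts convexity of $f_e=-K_e\cos$ on the open interval $(-\pi/2,\pi/2)$, but the faces lie where $|\Delta_e|=\pi/2$. I handle this directly: $f_e''=K_e\cos\Delta\ge0$ on the closed interval $[-\pi/2,\pi/2]$. So each $f_e$ is convex there, and each composition $f_e(\Inc^\top\theta)_e$, being a convex function of an affine map, is convex on $\Omega$. Adding the linear term $-p_{\rm eff}^\top\theta$ shows $\Vpot$ is convex on all of the closed set $\Omega$, and hence on every face $F_{e,s}\subseteq\Omega$. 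Since $\Vpot$ is continuous and each nonempty face is compact, each face problem $\min_{F_{e,s}}\Vpot$ attains its minimum.

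Finally, the minimum of a continuous function over a finite union of compact sets equals the minimum of the minima over the individual sets. Subtracting the constant $\Vpot(\theta^\star)$ then gives the displayed formula for $c^\star$, with at most $2L$ face problems. The closing remark about solver tolerance is not a mathematical claim. I would note only that each face problem is a convex program over a polytope, so any standard convex solver returns its value to the requested accuracy; the only error in the computed $c^\star$ is then the numerical error of those face minima.
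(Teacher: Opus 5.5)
Your proof is correct and follows the same route as the paper. It fixes the uniform-phase gauge so that connectivity gives compactness; the paper uses the sharper bound $(\pi/2)\operatorname{diam}(G)$, but your $(n-1)\pi/2$ works equally well. It then identifies $\partial\Omega$ with the union of the active-constraint sets of the polytope, intersects with hyperplanes to obtain compact convex faces, and uses convexity of $\Vpot$ on $\Omega$. Your direct check that $f_e''=K_e\cos\Delta\ge0$ on the closed interval, together with the explicit attainment and min-of-minima steps, fills in details the paper leaves implicit when it cites \Cref{prop:fenchel} (stated only on the open interval) for convexity on the closed polytope.
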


\begin{proof}
$\Omega$ is the polytope $\{|\Delta_e|\le\pi/2\}$, whose boundary is the union of
its facets, each obtained by activating one constraint; intersecting a convex set
with a hyperplane preserves convexity, and $\Vpot$ is convex on $\Omega$ by
\Cref{prop:fenchel}. Compactness holds after fixing the gauge $\theta_1=0$, since
connectivity bounds $|\theta_i|$ by $(\pi/2)\operatorname{diam}(G)$.
\end{proof}

\begin{remark}[What is certified]\label{rem:whatcert}
It is the certified energy sublevel
$\{(\theta,\omega):\theta\in\Omega,\ W(\theta,\omega)<c^\star\}$---not all of $\Omega$---that is contained in the region of attraction of $[\theta^\star]$.  On that sublevel, \Cref{thm:trans} certifies something stronger than eventual convergence: the trajectory never leaves the cohesive set, so no energized-line angle reaches $\pm\pi/2$ even transiently.  \Cref{sec:num} compares this deliberately conservative invariant-sublevel guarantee with finite-horizon swing simulation.
\end{remark}

\paragraph{What this section delivers, and to whom.}
For the idealized balanced-step model, the output is a screening number: the largest $\Delta P$ at a specified bus and a specified balancing allocation $\gamma$ for which the trajectory is
guaranteed to remain in $\Omega$ and converge.  The calculation uses network couplings, the old operating point and the chosen balanced post-step injection vector.  In planning it can compare
candidate sites and spatial balancing rules; in operations it can be recomputed as the dispatch changes.  It is not a substitute for electromagnetic-transient or governor/protection studies.

\section{From Theory to an Engineering Screening Workflow}
\label{sec:workflow}

A review is useful only if the theoretical objects can be translated into a decision sequence.  The intended use here is hierarchical screening.  The calculation does not replace full AC, electromagnetic-transient, protection or detailed governor studies.  It supplies a transparent angle-stability layer that can be inserted before those more expensive analyses and that can explain \emph{why} a case is easy, difficult, or topology-limited.

\subsection{Inputs, outputs, and escalation}

The static inputs are the energized graph, line couplings $K_e$, a balanced post-event injection vector $p$, and---when multiple winding cells are relevant---the cell to be tested.  The first calculation can remain the D\"orfler--Chertkov--Bullo screen because it requires only the weighted-Laplacian pseudoinverse.  If that sufficient condition passes, there is no reason to solve a stronger oracle merely to reproduce the same guarantee.  If it fails, the cycle-space program provides the natural second stage: it decides exact strict-cohesion feasibility in the selected winding cell.  On networks with many bridges, the bridge decomposition is useful before either solve because a bridge flow is fixed by conservation and cannot be relieved by any cycle correction.

For a transient query one additionally specifies the pre-event phase state, inertia and positive damping, together with the \emph{prescribed} post-event balancing allocation.  The post-event cohesive equilibrium becomes the center of the energy function.  The critical boundary energy is then the minimum of independent signed-edge face problems.  These face minimizations can be distributed across processors because no face depends on the solution of another.  If the initial energy lies below the resulting boundary minimum, the trajectory is certified to remain cohesive and converge.  If it does not, the method returns ``not certified,'' not ``unstable''; the appropriate escalation is nonlinear simulation or a less conservative stability analysis.

\begin{table*}[t]
\centering
\caption{A practical hierarchy for using the certificate.  Each stage is invoked only if the cheaper preceding stage is insufficient for the engineering question.}
\label{tab:workflow}
\small
\begin{tabular}{p{0.12\textwidth}p{0.20\textwidth}p{0.22\textwidth}p{0.32\textwidth}}
\toprule
stage & calculation & positive result means & negative or inconclusive result means \\
\midrule
1 & D\"orfler--Chertkov--Bullo Laplacian screen & a cohesive stable equilibrium is guaranteed & no conclusion; proceed to the nonlinear cycle-space oracle \\
2 & exact convex cycle-space solve in a chosen winding cell & that cell contains a strictly cohesive equilibrium & that cell is infeasible; inspect other admissible windings only if the application requires them \\
3 & convex boundary-energy calculation & the specified pre-event state lies in an invariant cohesive energy sublevel & no conclusion about instability; the certificate may simply be conservative \\
4 & nonlinear time simulation or higher-fidelity model & trajectory-level evidence under the chosen model & model- and horizon-dependent; still not a mathematical certificate for the real grid \\
\bottomrule
\end{tabular}
\end{table*}

\subsection{Planning, operations, and control are different uses}

In planning, the framework can compare candidate data-center locations, contingency sizes, network reinforcements and alternative spatial participation patterns.  The bridge corollary is especially interpretable in this setting: if the limiting transfer is forced through a bridge, nonlinear loop redistribution cannot improve the cohesive margin, so a more sophisticated cycle-space solve will not change the bottleneck.  In a meshed area, by contrast, the gap between the quadratic and nonlinear flows measures how much the network cycles can redistribute stress before a line reaches quadrature.

In operations, the same calculations can be repeated around a changing dispatch because the graph and cycle basis are largely reusable while $p$ changes.  The method is therefore naturally suited to contingency screening and sensitivity studies.  It is not, in its present form, an online emergency controller.  The vector $\gamma$ used below specifies a balancing action supplied to the model.  Comparing several choices of $\gamma$ answers a verification question---which prescribed allocation has the larger certified margin?---but it does not synthesize the best admissible feedback law.  Keeping this distinction explicit prevents the transient experiment from being mistaken for a control-design result.

Finally, the hierarchy makes the scope of ``exact'' precise.  The cycle-space solve is exact for strict cohesion \emph{inside a fixed winding cell} of the lossless fixed-voltage model.  The boundary-energy decomposition is exact for the corresponding direct-method energy barrier, up to numerical solver tolerance.  Neither statement makes the reduced model exact for an AC transmission grid, nor does either identify the complete nonlinear basin of attraction.  This separation between mathematical exactness and modeling fidelity is essential when a certificate is used in an engineering workflow.

\section{Open-Source Benchmark Validation}\label{sec:num}

The benchmarks serve two roles appropriate to a review-and-synthesis article.  First, small synthetic tests verify the analytical identities independently of the power-system cases.  Second, standard MATPOWER networks show when the integrated certificate materially changes the engineering conclusion and when topology forces it to agree with the simpler screen.

All results below are deterministic and intentionally synthetic.  The network topologies and static data are drawn from MATPOWER case39 (the 39-bus New England system) and the MATPOWER IEEE 118-bus case \cite{zimmerman2011matpower}; they are reduced to lossless fixed-voltage swing models.  We avoid calling case39 an ``IEEE-39'' test system: MATPOWER describes it as a New England 39-bus model that is only generally representative of the underlying 345-kV system.  These cases are useful for proof-of-principle algorithmic tests, but neither should be read as a calibrated dynamic model of an actual contemporary transmission system.  Synthetic inertias, damping and balanced-step controls are used only to expose the certificate's structure.

The companion code implements the cycle-space objective, the DCB comparison, equilibrium continuation, face-wise critical-energy minimization and finite-horizon swing simulation.

\subsection{The dual is the right object}\label{sec:numdual}

Three checks, in increasing strength.

\emph{Derivatives and reconstruction.}  On the random-graph validation instance, finite differences give maximum gradient and Hessian discrepancies of approximately $3.0\times10^{-9}$ and $1.7\times10^{-10}$, respectively.  Reconstructing nodal angles from the cycle-space minimizer gives an angle-consistency residual of order $3\times10^{-14}$.

\emph{Trees.}  On random trees the exact strict-cohesion threshold and the DCB threshold coincide to numerical precision, as \Cref{cor:tree} requires.

\emph{Winding counts.}  For rings of $n=3,\ldots,25$ at zero injection, the cycle-coordinate programs reproduce the strict-cohesion count $2\lfloor(n-1)/4\rfloor+1$.  The strict bound in \Cref{lem:finite} implies that $|\varpi|=q$ needs at least $4q+1$ edges.  For the MATPOWER case39 basis used in the code, with cycle lengths $3,4,4,5,7,7,10,17$, the per-cycle bounds are $0,0,0,1,1,1,2,4$, so the rectangular candidate box contains $1{,}215$ winding vectors.

The static comparison is more informative in words than as a bar plot because the two test cases illustrate different mechanisms.  In MATPOWER case39 the DCB and exact cohesive thresholds coincide: the first binding branch is a bridge, so its flow has no cycle-space degree of freedom that the nonlinear solution could use to redistribute stress.  In case118 the binding edge lies in the meshed part of the network; the nonlinear cycle-space minimizer can redistribute flow relative to the quadratic/DC minimizer, producing a visible gap between the sufficient DCB screen and the exact zero-winding cohesion threshold.  The numerical continuation beyond that threshold then measures a different gap---between strict cohesion and local stability---and should not be conflated with the DCB-versus-exact comparison.

\subsection{Existence: how much the exact test buys, and when it buys nothing}
\label{sec:numexist}

\Cref{tab:brack} compares the DCB sufficient threshold with the exact strict-cohesion threshold and with a warm-started numerical continuation of a locally stable branch.  The third number is included only as a numerical reference for what happens beyond $\Omega$; it is not promoted to a certified bifurcation point.

\begin{table*}[t]\centering
\caption{Loading multiplier $\alpha$: DCB sufficient threshold, exact strict-cohesion threshold, and a numerically continued locally stable branch limit.}
\label{tab:brack}
\small
\begin{tabular}{lccccl}
\toprule
system & $c=L-n+1$ & $\alpha_{\rm DCB}$ & $\alpha_{\rm coh}$ & $\alpha_{\rm cont}$ & cohesive binding line \\
\midrule
IEEE 118       & $62$ & $4.363697$ & $5.068736$ $(+16.16\%)$ & $5.268367$ & on a cycle \\
case39 & $8$ & $5.544474$ & $5.544472$ $(\approx0\%)$ & $5.544472$ & bridge $6$--$31$ \\
\bottomrule
\end{tabular}
\end{table*}

On IEEE 118 the exact strict-cohesion oracle permits $16.16\%$ more loading than the DCB sufficient condition.  Relative to the numerically continued branch limit, it closes $77.93\%$ of the interval between the DCB and continuation thresholds.  At the strict-cohesion boundary the relevant Jacobian is still nonsingular, with $\lambda_2\approx0.205$, so there is no basis for calling that point a saddle-node.  Continuing the locally stable branch beyond the cohesive boundary eventually produces states with several line differences exceeding $90^\circ$; the endpoint reported in Table~\ref{tab:brack} is therefore a numerical continuation result, not an exact stability boundary.

On MATPOWER case39 the DCB and strict-cohesion thresholds coincide because the binding branch is the bridge between buses $6$ and $31$.  The bridge row of the cycle basis is zero, so no cycle flow can relieve it.  When that bridge reaches quadrature its metagraph weight vanishes and the Jacobian acquires an additional zero mode, exactly as \Cref{cor:bridgefold} predicts.  This establishes singularity; a generic saddle-node classification would require additional nondegeneracy checks.

In IEEE 118 the cohesive-binding edge is not a bridge.  Farther along the continued locally stable branch, three lines have crossed $90^\circ$.  These observations describe two distinct geometric facts---meshed redistribution at the cohesive boundary and local stability beyond cohesion---and no theorem is claimed that relates the number of beyond-quadrature lines to the width of the continuation interval.

\subsection{The transient: where the balancing action is placed}\label{sec:numtrans}

The transient proof-of-principle uses the MATPOWER case39 reduction at $\alpha=4$.  The selected load block is at MATPOWER bus~20 and the fully concentrated balancing action is at bus~38.  The pre-step maximum energized-line angle difference is approximately $46.17^\circ$.  These are deliberately stressed, synthetic conditions chosen so the certificate binds; they are not offered as representative operating statistics for a real system.

It is useful to calibrate what ``large angle'' can mean in a genuinely severe event without confusing unlike quantities.  In the FERC/NERC reconstruction of the September~8, 2011 Southwest blackout, the phase-angle separation \emph{between the two terminals of an out-of-service 500-kV line} increased from about $20^\circ$ to approximately $72^\circ$ \cite{fercnerc2012blackout}.  That $72^\circ$ number is not an energized-branch angle and should not be compared literally with the line differences in our lossless model; it simply illustrates that tens of degrees occur in extreme post-contingency conditions.  Routine small-signal operation is not the regime targeted by the experiment below.

\Cref{tab:trans} compares, at $\alpha=4$, the energy-certified threshold, the finite-horizon simulated cohesion threshold, and the static strict-cohesion feasibility limit for five spatial balancing allocations.  ``Simulated'' means that the  Fourth Order Runge-Kutta (RK4) integration remained inside $|\Delta_e|<\pi/2$ over the stated finite horizon; it is not a theorem about asymptotic stability or the exact basin boundary.

\begin{table*}[t]\centering
\caption{Thresholds [MW] for the synthetic balanced injection step on the MATPOWER case39 reduction at $\alpha=4$, with a $2724.5$~MW load block at bus~20.  Column~$0$ spreads the simultaneous balancing action over generators; column~$1$ concentrates it at bus~38.}
\label{tab:trans}
\small
\begin{tabular}{lccccc}
\toprule
concentration of balancing action & $0$ & $0.25$ & $0.5$ & $0.75$ & $1$ \\
\midrule
energy-certified                  & $2035$ & $1751$ & $1442$ & $1186$ & $991$ \\
finite-horizon simulated cohesion & $2724$ & $2724$ & $2585$ & $2442$ & $2313$ \\
exact static strict-cohesion      & $2724$ & $2724$ & $2724$ & $2724$ & $2724$ \\
DCB static screen                 & $2724$ & $2724$ & $2724$ & $2724$ & $2724$ \\
\midrule
certified / simulated             & $0.747$ & $0.643$ & $0.558$ & $0.486$ & $0.428$ \\
\bottomrule
\end{tabular}
\end{table*}

The static tests accept the complete balanced step in every allocation, while the energy certificate is strongly spatially selective.  Moving from distributed balancing to a single remote balancing bus lowers the certified step from about $2035$ to $991$~MW, even though the load block and the total balanced injection change are otherwise the same.  The finite-horizon simulation also becomes less tolerant as the balancing action is concentrated, but it remains substantially less conservative than the certificate.

That conservatism is the intended direction.  The certificate guarantees that the trajectory never reaches the artificial boundary $|\Delta_e|=\pi/2$; it does not attempt to characterize the full nonlinear region of attraction.  The finite-horizon simulations are diagnostics, not the standard against which a proof should be tuned.
\begin{figure*}[t]\centering
  \includegraphics[width=0.58\textwidth]{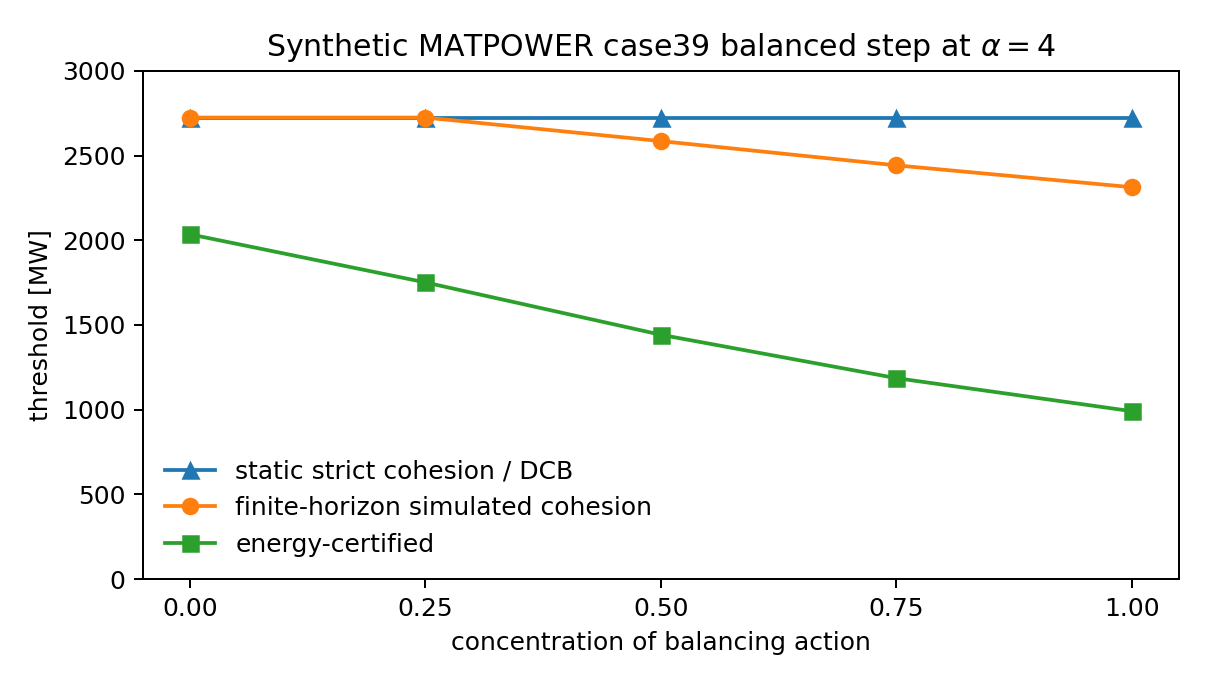}
\caption{Synthetic MATPOWER case39 balanced-step thresholds from Table~\ref{tab:trans}.  The exact static strict-cohesion and D\"orfler--Chertkov--Bullo screens accept the full $2724.5$~MW step for all allocations.  The finite-horizon swing simulation remains cohesive to a lower threshold when the balancing action is concentrated, while the energy certificate is more conservative and falls from about $2035$ to $991$~MW.  The separation between the static and transient curves is the practical reason to keep equilibrium feasibility and transient security as distinct layers of the certificate.}
  \label{fig:transient}
\end{figure*}

\begin{remark}\label{rem:coincidence}
In the MATPOWER case39 calculation the critical-energy face is associated with the same bridge,
buses $6$--$31$, that binds the static strict-cohesion limit.  This is an empirical observation for this test case, not a theorem that the steady and transient limiting faces must coincide.
\end{remark}

\section{Engineering Interpretation, Limitations, and Outlook}\label{sec:disc}

For a broad engineering audience, the main lesson is structural.  The same network geometry that classifies nonlinear equilibria also organizes a computable transient certificate.  The price of this unification is a deliberately restricted model and a deliberately conservative safety region; the benefit is that every step of the argument has a clear guarantee and a clear failure mode.

\paragraph{What the convexity bought.} One object served both questions. For
existence, within a fixed winding cell, it converted a sufficient quadratic-surrogate screen into an exact convex test and explained the
DCB/2013 condition as a quadratic truncation, with tree-exactness, the bridge case and
the bridge-induced Jacobian singularity of \Cref{cor:bridgefold} falling out as corollaries rather than as separate arguments. For the transient it converted the critical-energy computation from a nonconvex search into at most $2L$ convex signed-edge face programs.  Neither step required anything beyond \Cref{prop:fenchel}: the swing potential is convex on the cohesive polytope.

\paragraph{Interpretation of the proof-of-principle experiment.}
The numerical result says that, within this idealized model, the spatial allocation of a \emph{simultaneous balanced injection change} can alter the energy margin by roughly a factor of two.  It should not be read as a claim about fixed droop gain or fixed steady-frequency offset, because those dynamics are not present in the implemented experiment.  The broader literature on data-center flexibility and runtime grid coordination \cite{khanal2026shift,chung2026openg2g} makes spatial/temporal allocation a practically relevant design variable; a physically staged governor/load-control model is needed before translating the present certificate into procurement guidance.

\paragraph{Limitations, stated plainly.}
The study uses lossless, fixed-voltage network reductions and synthetic inertia/damping parameters.  The disturbance is an instantaneous \emph{balanced} injection change; governor lags, dead bands, converter controls, voltage/reactive-power dynamics, dynamic loads, transient overloads and protection actions are absent.  These omitted layers can interact with the angle dynamics and can dominate a real event.  The energy certificate is conservative by construction because it insists on never leaving the closed cohesive polytope.  The finite-horizon simulation thresholds are not exact basin boundaries.  Finally, although the strict winding bound makes the candidate set finite, exhaustive enumeration of all winding cells on a large meshed network is not demonstrated here; the static numerical comparison is performed on the zero-winding branch.

\paragraph{Why revisit this problem now?}
There is also a methodological --- and, for the author, persona l--- reason for the paper.  Roughly a decade ago I worked on direct synchronization-stability metrics for transmission-grid contingencies \cite{dorfler2013synchronization,bent2013syncaware,backhaus2014syncmetrics}.  At the time, I viewed loss of synchrony as a real but increasingly specialized risk: demand response, distributed resources, microgrids and faster controls appeared to be making the classical ``large coherent disturbance'' picture less central.  The rapid emergence of concentrated computational facilities during the recent year changes that assessment.  A single customer class can now create a multi-gigawatt load step on a time scale relevant to system dynamics.  That does not invalidate the older stability machinery; it makes it useful again under a new forcing pattern.

This perspective also explains the paper's form.  It is partly a review of tools developed in several communities --- synchronization conditions, winding-cell geometry, separable convex network flow and direct transient-energy methods --- and partly an attempt to recombine them around the new large-load problem.  The novelty claimed here is therefore  integrative: the cycle-space convex oracle, its precise relation to the DCB sufficient screen, and the finite family of convex boundary-energy problems are assembled into one static-to-dynamic certificate rather than presented as unrelated results.

\paragraph{Path forward.}

The next layer is to relax the idealizations one at a time rather than to enlarge the present certificate by accretion.  Finite-delay load loss and primary response would turn the simultaneous balanced step into a staged trajectory; stochastic generation and demand would turn deterministic cohesion into a probabilistic first-exit question; and feedback-capable resources would change the task from verifying a prescribed balancing allocation to synthesizing an implementable control policy with a quantitative certificate.  These are natural extensions of the deterministic geometry developed here, but they are distinct problems and are deliberately left outside the present article.  A complete engineering treatment must also reconnect the angle layer to voltage/reactive-power dynamics, thermal overloads, converter controls and protection actions.

\paragraph*{Code availability.}
A companion Python/Jupyter package supplied with the manuscript reproduces the cycle-space derivative checks, winding-count tests, MATPOWER case39/case118 static calculations, continuation experiments, critical-energy face minimizations, transient integrations, and the reported figure and tables.  The numerical package is intended to make every benchmark claim in the article independently reproducible from standard open-source network data.

\paragraph*{Disclaimer on the use of language models.} A large language model was
used to proof-read the manuscript and to help design, implement and check the
numerical experiments. All results were reviewed by the author, who is solely
responsible for the content, the claims and any errors.

\bibliographystyle{IEEEtranN}
\bibliography{refs}

@article{dorfler2013synchronization,
  author  = {D\"orfler, Florian and Chertkov, Michael and Bullo, Francesco},
  title   = {Synchronization in complex oscillator networks and smart grids},
  journal = {Proceedings of the National Academy of Sciences},
  volume  = {110}, number = {6}, pages = {2005--2010}, year = {2013},
  doi     = {10.1073/pnas.1212134110}
}

@article{manik2017cycle,
  author  = {Manik, Debsankha and Timme, Marc and Witthaut, Dirk},
  title   = {Cycle flows and multistability in oscillatory networks},
  journal = {Chaos},
  volume  = {27}, number = {8}, pages = {083123}, year = {2017},
  doi     = {10.1063/1.4994177}
}

@article{korsak1972,
  author  = {Korsak, Andrew J.},
  title   = {On the question of uniqueness of stable load-flow solutions},
  journal = {IEEE Transactions on Power Apparatus and Systems},
  volume  = {PAS-91}, number = {3}, pages = {1093--1100}, year = {1972},
  doi     = {10.1109/TPAS.1972.293463},
  note    = {Stanford Research Institute. Gives the first explicit counterexample
             to uniqueness of stable load-flow solutions, and observes that it
             ``could easily occur in the `donut' of the Western United States
             interconnected system''.}
}

@article{tavora1972,
  author  = {Tavora, Carlos J. and Smith, Otto J. M.},
  title   = {Equilibrium analysis of power systems},
  journal = {IEEE Transactions on Power Apparatus and Systems},
  volume  = {PAS-91}, number = {3}, pages = {1131--1137}, year = {1972},
  doi     = {10.1109/TPAS.1972.293469}
}

@article{coletta2016topologically,
  author  = {Coletta, Tommaso and Delabays, Robin and Adagideli, In\c{c}i and
             Jacquod, Philippe},
  title   = {Topologically protected loop flows in high voltage AC power grids},
  journal = {New Journal of Physics},
  volume  = {18}, pages = {103042}, year = {2016},
  doi     = {10.1088/1367-2630/18/10/103042}
}

@article{delabays2016multistability,
  author  = {Delabays, Robin and Coletta, Tommaso and Jacquod, Philippe},
  title   = {Multistability of phase-locking and topological winding numbers in
             locally coupled Kuramoto models on single-loop networks},
  journal = {Journal of Mathematical Physics},
  volume  = {57}, number = {3}, pages = {032701}, year = {2016},
  doi     = {10.1063/1.4943296}
}

@article{chiang1995direct,
  author  = {Chiang, Hsiao-Dong and Chu, Chia-Chi and Cauley, Gerry},
  title   = {Direct stability analysis of electric power systems using energy
             functions: theory, applications, and perspective},
  journal = {Proceedings of the IEEE},
  volume  = {83}, number = {11}, pages = {1497--1529}, year = {1995},
  doi     = {10.1109/5.481632}
}

@article{athay1979practical,
  author  = {Athay, T. and Podmore, R. and Virmani, S.},
  title   = {A practical method for the direct analysis of transient stability},
  journal = {IEEE Transactions on Power Apparatus and Systems},
  volume  = {PAS-98}, number = {2}, pages = {573--584}, year = {1979},
  doi     = {10.1109/TPAS.1979.319407}
}

@book{rockafellar1984network,
  author    = {Rockafellar, R. Tyrrell},
  title     = {Network Flows and Monotropic Optimization},
  publisher = {Wiley}, year = {1984}
}

@article{zimmerman2011matpower,
  author  = {Zimmerman, Ray D. and Murillo-S\'anchez, Carlos E. and Thomas, Robert J.},
  title   = {{MATPOWER}: Steady-state operations, planning, and analysis tools for
             power systems research and education},
  journal = {IEEE Transactions on Power Systems},
  volume  = {26}, number = {1}, pages = {12--19}, year = {2011}
}

@article{jafarpour2022torus,
  author  = {Jafarpour, Saber and Huang, Elizabeth Y. and Smith, Kevin D. and Bullo, Francesco},
  title   = {Flow and Elastic Networks on the {$n$}-Torus: Geometry, Analysis, and Computation},
  journal = {SIAM Review},
  volume  = {64},
  number  = {1},
  pages   = {59--104},
  year    = {2022},
  doi     = {10.1137/18M1242056}
}

@inproceedings{bent2013syncaware,
  author    = {Bent, Russell and Bienstock, Daniel and Chertkov, Michael},
  title     = {Synchronization-Aware and Algorithm-Efficient Chance Constrained Optimal Power Flow},
  booktitle = {2013 IREP Symposium on Bulk Power System Dynamics and Control -- IX Optimization, Security and Control of the Emerging Power Grid},
  year      = {2013},
  doi       = {10.1109/IREP.2013.6629400},
  eprint    = {1306.2972},
  archivePrefix = {arXiv}
}

@misc{pjm2026dominion,
  author       = {{PJM Interconnection}},
  title        = {July 22, 2026 Dominion Load Transfer Event},
  howpublished = {System Operations Subcommittee presentation, July 31, 2026},
  year         = {2026},
  note         = {Reports approximately 3,800 MW of load transferring to backup power after a 230-kV fault; ACE reached +3,928 MW and frequency 60.092 Hz}
}

@INPROCEEDINGS{zhao2014intermittent,
	author={Zhao, Changhong and Chertkov, Michael and Backhaus, Scott},
	booktitle={2015 48th Hawaii International Conference on System Sciences}, 
	title={Optimal Sizing of Voltage Control Devices for Distribution Circuit with Intermittent Load}, 
	year={2015},
	pages={2680-2689},
	doi={10.1109/HICSS.2015.323}}

@article{khanal2026shift,
  author       = {Khanal, Saroj and Roh, Geon and Yao, Boyu and Silverman, Abraham and Gayme, Dennice and Konstantinou, Charalambos and Kim, Jip and Dvorkin, Yury},
  title        = {Shift or curtail? How much data-center flexibility is worth depends on the host power grid},
  year         = {2026},
 journal     = {arXiv:2608.19622},
  archivePrefix= {arXiv},
  primaryClass = {physics.soc-ph}
}

@article{chung2026openg2g,
  author       = {Chung, Jae-Won and Liang, Zhirui and Mao, Yanyong and Chen, Jiasi and Chowdhury, Mosharaf and Dvorkin, Vladimir},
  title        = {{OpenG2G}: A Simulation Platform for AI Datacenter-Grid Runtime Coordination},
  year         = {2026},
  journal       = {arXiv:2605.05519},
  archivePrefix= {arXiv},
  primaryClass = {cs.LG}
}

@techreport{fercnerc2012blackout,
  author      = {{Federal Energy Regulatory Commission} and {North American Electric Reliability Corporation}},
  title       = {Arizona--Southern California Outages on September 8, 2011: Causes and Recommendations},
  institution = {FERC and NERC},
  year        = {2012},
  month       = apr,
  note        = {Staff report; reconstruction reports the phase-angle separation between the terminals of the out-of-service Hassayampa--North Gila 500-kV line increasing from about 20 degrees to about 72 degrees}
}

@article{almada2026dynamic,
	author        = {Ayrton Almada and Laurent Pagnier and Igal Goldshtein and Saif R. Kazi and Michael Chertkov},
	title         = {Real-Time Dynamic N-1 Screening: Identifying High-Risk Lines and Transformers After Common Faults},
	year          = {2026},
	journal        = {arXiv:2602.12293},
	archivePrefix = {arXiv},
	primaryClass  = {math.OC}
}

@article{backhaus2014syncmetrics,
	author        = {Scott Backhaus and Russell Bent and Daniel Bienstock and Michael Chertkov and Dvijotham Krishnamurthy},
	title         = {Efficient Synchronization Stability Metrics for Fault Clearing},
	year          = {2014},
	journal       = {arXiv:1409.4451},
	archivePrefix = {arXiv},
	primaryClass  = {cs.SY}
}

@misc{pjm2026largeload,
	author       = {{PJM Interconnection}},
	title        = {{PJM, Dominion Review Large Load Transfer Event}},
	howpublished = {PJM Inside Lines},
	month        = aug,
	year         = {2026},
	note         = {Aug. 11, 2026. Available: \url{https://insidelines.pjm.com/pjm-dominion-review-large-load-transfer-event/}}
}

@misc{ferc2026computational,
	author       = {{Federal Energy Regulatory Commission}},
	title        = {{Order Directing the North American Electric Reliability Corporation to File Reliability Standard(s) Pertaining to Computational Load Integration}},
	howpublished = {Docket No. RD26-7-000},
	month        = jul,
	year         = {2026},
	note         = {July 16, 2026. Summary available: \url{https://www.ferc.gov/news-events/news/summaries-july-2026-commission-meeting}}
}

@misc{nerc2026largeloads,
	author       = {{North American Electric Reliability Corporation}},
	title        = {{Large Loads Action Plan: Addressing Reliability Risks of Emerging Large Loads}},
	year         = {2026},
	note         = {Large Loads Working Group resources and 2026 action-plan updates. Available: \url{https://prod.nerc.com/initiatives/large-loads-action-plan}}
}

@techreport{nerc2026risk,
	author      = {{North American Electric Reliability Corporation}},
	title       = {{Risk Mitigation for Emerging Large Loads}},
	institution = {North American Electric Reliability Corporation},
	year        = {2026},
	month       = may,
	note        = {Reliability Guideline. Available: \url{https://www.nerc.com/globalassets/our-work/guidelines/reliability/RG_Risk-Mitigation-For-Emerging-Large-Loads.pdf}}
}

\begin{comment}
\begin{IEEEbiography}
	[{\includegraphics[width=1in,height=1.25in,clip,keepaspectratio]{Misha09_17_24-headshot.pdf}}]
	{Michael (Misha) Chertkov}
	is Professor of Mathematics at the University of Arizona. He received his Ph.D. in physics from the Weizmann Institute of Science in 1996, was an R.~H. Dicke Fellow at Princeton University, and joined Los Alamos National Laboratory as a J.~R. Oppenheimer Fellow in 1999. After two decades at Los Alamos, he moved to the University of Arizona in 2019. He is a Fellow of the American Association for the Advancement of Science (AAAS) and the American Physical Society (APS), and a Senior Member of IEEE.
	
	His research lies at the interface of applied mathematics, statistical physics, control, optimization, and machine learning, with applications to complex natural and engineered systems. His work on energy systems includes power-grid synchronization and stability, optimization and control under uncertainty, rare-event dynamics, and data-driven and generative methods for grid modeling and decision making. More broadly, his current research explores stochastic dynamics, inference, generative artificial intelligence, and control.
\end{IEEEbiography}
\end{comment}

\end{document}